\documentclass[sigconf]{acmart}

\usepackage{graphicx}
\usepackage{import}
\usepackage{xcolor}
\usepackage{mathtools}
\usepackage{bbm}
\usepackage{amsmath}
\usepackage{float}

\newtheorem{Lemma}{Lemma}

\usepackage[linesnumbered,algoruled,boxed,lined, noend, noline]{algorithm2e}

\newcommand{\trival}[1]{\Delta_{#1}}
\newcommand{\btri}[1]{B\Delta^{#1}}
\newcommand{\utri}[1]{U\Delta^{#1}}
\newcommand{\btrival}[2]{B\Delta^{#1}_{#2}}
\newcommand{\utrival}[2]{U\Delta^{#1}_{#2}}
\newcommand{\ctrival}[2]{?\Delta^{#1}_{#2}}
\AtBeginDocument{%
  }

\begin{document}

\title{Counting Balanced Triangles on Social Networks With Uncertain Edge Signs}

\author{Alexander Zhou}
\affiliation{%
  \institution{Hong Kong Polytechnic University}
}
\email{alexander.zhou@polyu.edu.hk}

\author{Haoyang Li}
\affiliation{%
  \institution{Hong Kong Polytechnic University}
}
\email{haoyang-comp.li@polyu.edu.hk}

\author{Anxin Tian}
\affiliation{%
  \institution{Hong Kong University of Science and Technology}
}
\email{atian@connect.ust.hk}

\author{Zhiyuan Li}
\affiliation{%
  \institution{Hong Kong University of Science and Technology}
}
\email{zlicw@cse.ust.hk}

\author{Yue Wang}
\affiliation{%
  \institution{Shenzhen Institute of Computing Sciences}
}
\email{yuewang@sics.ac.cn}

\begin{abstract}
On signed social networks, balanced and unbalanced triangles are a critical motif due to their role as the foundations of Structural Balance Theory. The uses for these motifs have been extensively explored in networks with known edge signs, however in the real-world graphs with ground-truth signs are near non-existent, particularly on a large-scale. In reality, edge signs are inferred via various techniques with differing levels of confidence, meaning the edge signs on these graphs should be modelled with a probability value. In this work, we adapt balanced and unbalanced triangles to a setting with uncertain edge signs and explore the problems of triangle counting and enumeration. We provide a baseline and improved method (leveraging the inherent information provided by the edge probabilities in order to reduce the search space) for fast exact counting and enumeration. We also explore approximate solutions for counting via different sampling approaches, including leveraging insights from our improved exact solution to significantly reduce the runtime of each sample resulting in upwards of two magnitudes more queries executed per second. We evaluate the efficiency of all our solutions as well as examine the effectiveness of our sampling approaches on real-world topological networks with a variety of probability distributions.
\end{abstract}

\maketitle

\section{Introduction}
The triangle, a 3-edge and 3-node cycle, has been extensively utilised as a core motif for analysis and subgraph discovery in signed social networks, where edges between users exist with a positive (+) or negative (-) sign. This model is commonly utilised when examining online social networks as a positive edge can indicate friendship, trust or similarity whilst a negative edge is more indicative of antagonism, distrust or simply just difference \cite{signed/leskovec, signed/tang}. The signed network is also utilised in many other fields such as political science \cite{signed/usecase/fontan}, neuroscience \cite{signed/usecase/saberi} and ecology \cite{signed/usecase/saiz}.

\begin{figure}
\centering
\includegraphics[height = 2.3cm]{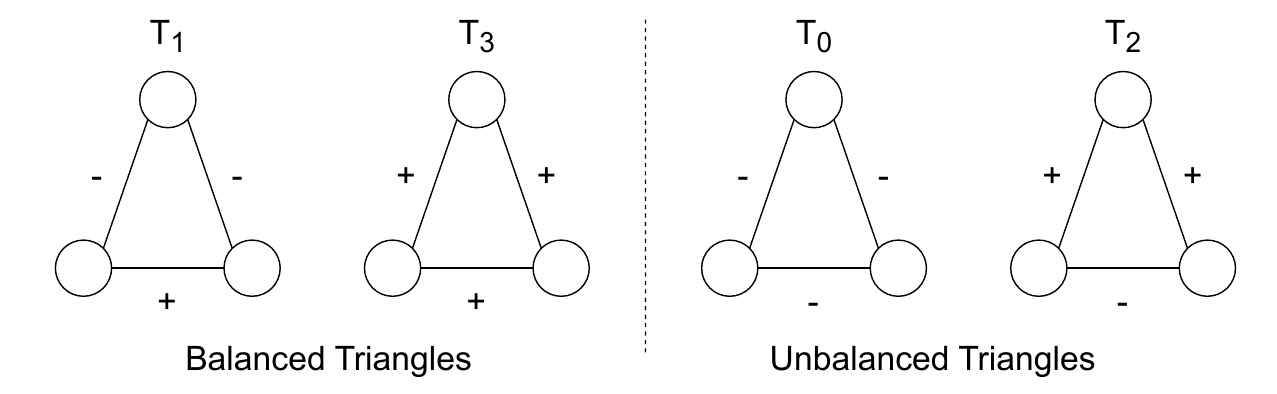}
\caption{An example of balanced and unbalanced triangles.}
\label{fig/baltri}
\end{figure}

One reason that triangles are a popular motif to research on these signed graphs is due to their role in `Structural Balance Theory', a concept with roots in social science, which is based on the idea that ``The enemy of my enemy is my friend'' \cite{signed/balanced/davis, signed/balanced/harary}. A balanced triangle is one which consists of an odd number of `+' edges, whilst an unbalanced triangle consists of an even number of '+' signs. Figure \ref{fig/baltri} illustrates the four possible types of triangles in a signed network and separates them into balanced and unbalanced classes.

Established research problems that consider Structural Balance Theory on signed social networks includes triangle counting \cite{triangle/signed/arya}, spam classification \cite{signed/usecase/jeong}, balance maximization \cite{signed/balance/sharma}, balanced subgraph \cite{signed/balanced/bhore, signed/balanced/chen2, signed/balanced/figueiredo, signed/balanced/ordozgoiti} and community search \cite{signed/balance/anchuri} including classic cohesive subgraph structures such as cliques \cite{signed/balanced/chen, signed/balanced/yao}, cores \cite{signed/balanced/sun} and trusses \cite{signed/balanced/wu, signed/balanced/zhao}. Communities based on structural balance theory are particularly of interest as they inherently produce Polarized Communities which consist of individual Echo Chambers which only connect to other communities via negative connections \cite{signed/balanced/chen, signed/balanced/yao, me/epc}.

However one crucial detail which is almost entirely ignored when discussing signed networks is the acknowledgement that in real-world networks each sign that has been assigned to an edge is near-impossible to know with certainty. Aside from smaller available datasets such as Epinions and Slashdot \footnote{Available on Stanford SNAP: http://snap.stanford.edu/data/index.html} in which the signs are ground truth labels given by users (even then there still exist potential questions regarding truthfulness or noise), there are no publicly available networks with ground truth signs. Indeed there is a subsection of research aiming to predict signs via some inference technique \cite{prediction/javari, prediction/khodadadi, prediction/leskovec, prediction/tang}. This inherently limits the real world usability of all signed graph algorithms as they cannot handle edges in which the signs are not certain.

\begin{figure}
\centering
\includegraphics[height = 2.5cm]{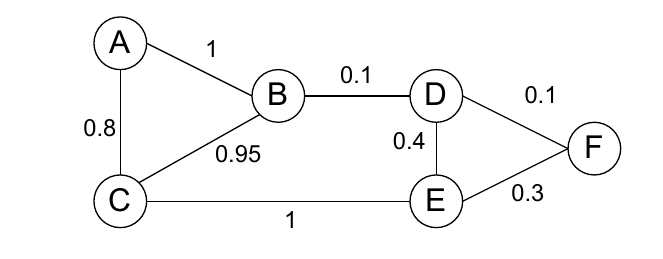}
\caption{An example of a signed graph with uncertain edges. The number associated with the edge is the probability the edge is positive.}
\label{fig/unsigngraphex}
\end{figure}

Given that signs are so inherently uncertain for the purposes of real-world networks, it should naturally follow that research on signed networks should account for the probability associated with the signs. In this work we explore this problem from the direction of balanced and unbalanced triangles. 

Take the example graph in Figure \ref{fig/unsigngraphex}, where the number on each edge refers to the probability that edge is positive. The triangles formed by the nodes $A, B, C$ has a 0.77 probability of being balanced whilst the triangle formed by nodes $D, E, F$ has only a 0.468 probability. These probabilities indicate different semantic information, with the $A, B, C$ triangle being much more likely to be balanced whilst it is essentially inconclusive whether the $D, E, F$ is balanced or unbalanced. 

Our work proposes an efficient algorithm to count and/or enumerate all triangles in the network which are likely to be balanced or unbalanced (with probability exceeding some threshold value $t$). Given that triangle counting for extremely large graphs is computationally expensive, we also propose sampling approaches which allows for the network operator to estimate the count of the uncertain balanced and unbalanced triangles in the network quickly.

The reason that being able to the find triangles that are likely to be balanced (or unbalanced) on a network with uncertain edge signs is interesting is because it allows us to reap all the described benefits of balanced triangles on a more realistically obtainable network (given ground truth signs are so elusive). By definition, all these problems require the discovery of balanced (or unbalanced) triangles and our method would allow the further exploration of these problems on networks with inferred signs (where certainty is not guaranteed) \cite{signed/balanced/chen, signed/balanced/chen2, signed/balanced/partition/doreian, signed/balanced/figueiredo, signed/balanced/partition/krauthgamer, signed/balanced/link/liu, signed/balanced/ordozgoiti, signed/balanced/sun, signed/balanced/yao}. We propose the first work to consider triangle counting and Structural Balance Theory in a signed network in which the signs are uncertain. 

To outline the direction of this work, we first review the existing literature on triangle counting as well as research dedicated to signed networks with a particular focus on those which are built around Balance Theory (Section 2). We then formally define our problem (Section 3) before introducing a baseline exact counting method (Section 4). We then introduce an improved counting algorithm which leverages the inherent properties of probability theory (Section 5) before introducing our approximate sampling methods (Section 6). We verify the performance and accuracy of our methods against a 12 networks (Section 7) before discussing future directions and concluding this work (Section 8).

\section{Related Works}
Exact and approximate triangle counting on a normal, undirected graph has been extensively studied \cite{triangle/batagelj, triangle/chu, para/green, triangle/hasan, para/hoang, para/hu, para/huSig, triangle/latapy, triangle/luce, triangle/schank}. These established techniques do not take into account signs or probabilities, making them difficult to trivially adapt to our setting.

Triangles are of significant interest in web research as they are used as building blocks of community structures for online and offline social networks, in particular the truss model in which each edge exists only if it is a part of a predefined number of triangles \cite{triangle/truss/akbas, triangle/uncertain/huang, triangle/truss/liao, me/dtruss, signed/balanced/wu, triangle/truss/xu, signed/balanced/zhao, triangle/uncertain/zou}.

\textbf{Signed Graph Problems: }
Signed networks have also recently seen significant attention as a research direction due to how it models rich semantic information on social networks. Non-structural balance theory research on signed graph includes notable problems such as influence maximization \cite{signed/infmax/ju, signed/infmax/liang, signed/infmax/yin}, recommendation \cite{signed/rec/symeonidis, signed/rec/tang} and community detection \cite{signed/nonbalcom/chen, signed/nonbalcom/li, signed/nonbalcom/yang, me/epc}.

Structural balance theory as a foundation for problems on signed graphs is an area of significant interest, in particular as a formulation tool for community structures \cite{signed/balanced/chen, signed/balanced/chen2, signed/balanced/figueiredo, signed/balanced/ordozgoiti, signed/balanced/sun, signed/balanced/yao, signed/balanced/zhao}. Other notable uses for balance theory on signed networks include balance maximization \cite{signed/balance/sharma}, partitioning \cite{signed/balanced/partition/doreian, signed/balanced/partition/krauthgamer}, sign inference \cite{signed/balanced/inference/patidar} and link prediction \cite{signed/balanced/link/chiang, signed/balanced/link/liu}. On the bipartite setting, in which the triangle motif is replaced by the butterfly, structural balance theory has also been explored \cite{signed/bipartite/derr, me/balance}. Similar work has also been explored in finding butterflies in the uncertain graph setting where the edges themselves exist with certain probability \cite{me/ubfc, me/ubitd}.

On signed graphs in particular, the problem of counting triangles is often viewed from the lens of being balanced or unbalanced with one notable solution providing an incremental method which is operational on dynamic systems \cite{triangle/signed/arya}. Due to the nature of exact balanced triangle counting and listing largely requiring an exhaustive examination of all triangles, many existing works just use standard triangle enumeration techniques \cite{signed/balanced/chen, signed/balanced/sun, signed/balanced/yao, signed/balanced/zhao}. There has also been some notable work conducted on motif counting (including triangles) in the uncertain space \cite{triangle/uncertain/ma} as well as a truss-based communities \cite{triangle/uncertain/huang, triangle/uncertain/zou} however the techniques discussed in these works are unsuitable to direct adaptation to our problem. Our work is the first to propose a uniform problem setting for considering triangles under the conditions of probabilistic edge signs, opening up avenues for extending the literature in this new direction. Recently a work was proposed that examines signed uncertain graphs for antagonistic groups, though they additionally require weighted semantic information \cite{sug/zhang}.

\section{Preliminaries}
\begin{table}[ht]
\begin{center}
\caption{Key Notations and Definitions}
\begin{tabular}{ |c|c| } 
\hline
\textbf{Notation} & \textbf{Definition} \\ 
\hline
$G = (V, E, P_+)$ & Uncertain Signed Graph\\
\hline
$deg(u)$ & Degree of Node $u$\\
\hline
$N(u)$ & Neighbour Set of $u$\\
\hline
$P_+(e)$ & Probability the Sign of $e$ is Positive \\
\hline
$P_-(e) = 1 - P_+(e)$ & Probability the Sign of $e$ is Negative\\ \hline
$P_{bal}(.)$ & Probability a Subgraph is Balanced \\
\hline
$P_{unbal}(.) = 1 - P_{bal}(.)$ & Probability a Subgraph is Unbalanced \\
\hline
$t$ & Probability Threshold Value\\
\hline
$\trival{u, v, w}$ & Triangle with Nodes $u$, $v$ and $w$\\
\hline
$\btrival{t}{u, v, w}$ & Uncertain Balanced Triangle \\
\hline
$\utrival{t}{u, v, w}$ & Uncertain Unbalanced Triangle \\
\hline
$\ctrival{t}{u, v, w}$ & Uncertain Unclassified Triangle\\
\hline
$T_i$ & Set of Triangles with $i$ Positive Edges\\
\hline
$<_{no}$ & Node Order\\
\hline
$a(e)$ & Absolute Edge Value\\
\hline
$<_{aeo}$ & Absolute Edge Order\\
\hline
\end{tabular}
\label{deftable}
\end{center}
\end{table}

In this section we formally define the problem of Uncertain Balanced and Unbalanced Triangle Counting and Enumeration on Uncertain Signed Graphs. Table \ref{deftable} provides a summary of the key notation.

\begin{definition}
\textbf{Uncertain Signed Graph: } An uncertain signed graph $G = (V, E, P_+)$ consists of a set of vertices $V$ as well as a set of edges $E$. We use $e(u, v)$ to denote an edge connecting nodes the nodes $u, v \in V$. $P_+$ is a set of mappings such that $P_+(e) \in [0, 1]$ is the probability that an edge $e \in E$ has a positive sign.
\end{definition}

The (undirected) Uncertain Signed Graph setting assigns each edge with the probability that the sign associated with the edge is positive (and thus we can calculate the probability the edge is negative via ($P_-(e) = 1 - P_+(e)$)). This network setting assumes that each edge itself exists with certainty and that only the signs themselves can be uncertain but each edge must have a sign. Edges can also be assigned $P_+$ values of 1 or 0 meaning they are known to be positive or negative with certainty respectively.

\begin{definition}
\textbf{Triangle ($\trival{u,v,w}$): } A triangle $\trival{u, v, w}$ on $G$ consisting of nodes $u$, $v$ and $w$ in $V$ is a fully connected subgraph on these three nodes (i.e. edges $e(u, v)$, $e(u, w)$ and $e(v, w)$ all exist in $E$). 
\end{definition}

\begin{definition}
\textbf{Balanced and Unbalanced Triangles: } A balanced triangle (on signed graphs with certain edge signs) is a triangle which contains an odd number of positive edges. By extension, an unbalanced triangle is a triangle which contains an even number of positive edges. 
\end{definition}

The basic topological structure of the triangle motif on Uncertain Signed Graphs remains unchanged from the standard graph setting. We use $T_i$ to refer to the class of triangles with $i$ positive edges. Figure \ref{fig/baltri} illustrates balanced and unbalanced triangles on the Signed Graph setting (with known edge signs). In essence, triangles with 1 (class $T_1$) or 3 (class $T_3$) positive edges are balanced and triangles with 0 (class $T_0$) or 2 (class $T_2$) positive edges are unbalanced. We now extend the concept of balanced and unbalanced triangles to the Uncertain Signed Graph setting.

\begin{definition}
\textbf{Uncertain Balanced Triangle $\btrival{t}{u, v, w}$:} An uncertain balanced triangle $\btrival{t}{u, v, w} \in G$ is a triangle which is balanced with probability greater or equal to a threshold value $t \in [0, 1]$.
\end{definition}

\begin{definition}
\textbf{Uncertain Unbalanced Triangle $\utrival{t}{u, v, w}$:} An uncertain balanced triangle $\utrival{t}{u, v, w} \in G$ is a triangle which is unbalanced with probability greater than a threshold value $t \in [0, 1]$.
\end{definition}

We introduce a user-defined threshold parameter $t$ which is the probability that must be exceeded for a triangle $\trival{u, v, w}$ to be considered either a balanced or unbalanced triangle on the uncertain signed graph. For practical purposes we assume $t \geq 0.5$. The probability a triangle is balanced can be calculated via: 

\begin{equation*}
\begin{split}
    P_{bal}(\trival{u, v, w}) 
    & = P(\trival{u, v, w} \in T_3) + P(\trival{u, v, w} \in T_1) \\
    & = (P_+(u, v) * P_+(u, w) * P_+(v, w)) \\ 
    & + (P_+(u, v) * P_-(u, w) * P_-(v, w)) \\ 
    & + (P_-(u, v) * P_+(u, w) * P_-(v, w)) \\ 
    & + (P_-(u, v) * P_-(u, w) * P_+(v, w))
\end{split}
\end{equation*}

In essence, the probability a triangle is balanced ($P_{bal}(\trival{u, v, w})$) is equal to the probability that all three signs of the triangle are positive added to the probability that only one of the three edges of the triangle is positive (i.e. the triangle is in class $T_1$ or class $T_3$). Additionally, the probability that a triangle is unbalanced ($P_{unbal}(\trival{u, v, w})$) is the probability that either two or zero signs of the triangle are positive (i.e. the triangle is in class $T_0$ or $T_2$) or $1 - P_{bal}(\trival{u, v, w})$.

In our setting $t$ is the only user parameter, which from a practical perspective is quite reasonable for a network operator to decide upon as it is inherently understandable as a concept (a threshold that a probability has to surpass).

\begin{figure}
\centering
\includegraphics[height = 5cm]{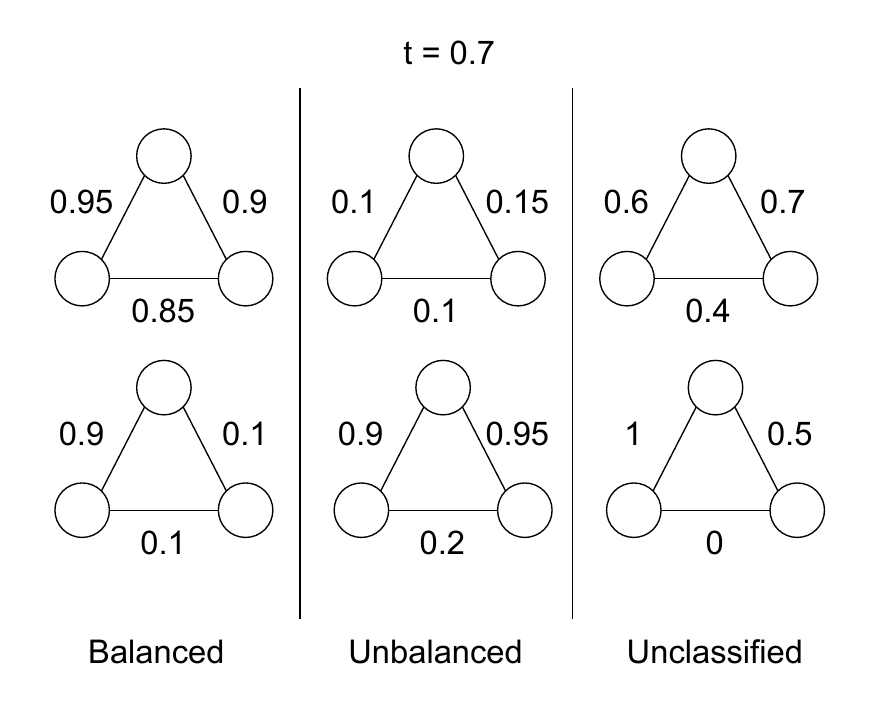}
\caption{An example of uncertain balanced, unbalanced and unclassified triangles for $t = 0.7$. The numerical value of each edge corresponds to $P_+(e)$}
\label{fig/untriex}
\end{figure}

We illustrate some example uncertain balanced and unbalanced triangles in Figure \ref{fig/untriex} for a threshold value of $t = 0.7$. In the left column we have two triangles which are likely to be balanced with probability greater than $0.7$. The top triangle is most likely to belong to the class $T_3$ whilst the bottom triangle is most likely to belong to the class $T_1$. For the middle column we have two triangles which are likely to be unbalanced, with the top being in class $T_0$ and the bottom being in class $T_2$.

This method of classifying triangles as either balanced or unbalanced if they reach a certain threshold means for any value of $t > 0.5$ there can exist triangles which do not satisfy the probability threshold requirement for being balanced or unbalanced. We refer to this set of triangles as Unclassified Triangles.

\begin{definition}
\textbf{Uncertain Unclassified Triangle $\ctrival{t}{u, v, w}$:} An uncertain unclassified triangle $\ctrival{t}{u, v, w} \in G$ is a triangle which is balanced with a probability in the range $[\min\{1 - t, t\}, \max\{1 - t, t\})$ for a threshold value $t \in [0, 1]$.
\end{definition}

Referring back to Figure \ref{fig/untriex}, the right column demonstrates two triangles which are unclassified. The top triangle is one where all three edges are too difficult to distinguish as likely to be positive or negative and the bottom triangle is one in which one edge is a coin-flip and the other two are fixed thus making it equally likely for the triangle to be balanced or unbalanced and thus useless for balanced or unbalance triangle analytics.

\begin{definition}
\textbf{The Uncertain Balanced/Unbalanced Triangle Enumeration Problem: } For an uncertain signed graph $G$ and a probability threshold value $t$, enumerate all Uncertain Balanced and/or Unbalanced triangles in $G$.
\end{definition}

\begin{definition}
\textbf{The Uncertain Balanced/Unbalanced Triangle Counting Problem: } For an uncertain signed graph $G$ and a probability threshold value $t$, determine the number of Uncertain Balanced and/or Unbalanced triangles in $G$.
\end{definition}

These problems are the two most widely explored research problems for triangles on graphs adapted to the setting of uncertain edge signs. For each problem we allow the network operator to specify whether they wish to enumerate or count either balanced or unbalanced triangles or both depending on the purpose of their query. The techniques presented in this work assume both are desired.

\section{Baseline Exact Solution}
We introduce a baseline exact solution for Uncertain Triangle Counting and Enumeration in this section. Our baseline involves individually examining each triangle to determine if their sign probabilities indicate they are balanced, unbalanced or unclassified based on adapting established techniques for normal graphs.

The key difference between our problem for both counting and enumeration compared to methods for counting the number of uncertain triangles on a standard uncertain graph (all triangles which exist with probability $\geq t$) is the added search space of requiring the examination of edges which have a low probability. In the standard uncertain graph scenario these edges can be discarded as they do not contribute to a valid output. However, since the semantics of edge probabilities are different when they refer to the probability of an edge being positive, these edges must sill be considered for our problem as they still may contribute to a balanced or unbalanced triangle.

For our baseline solution, we present an algorithm capable of systematically enumerating the uncertain balanced and/or unbalanced triangles in a graph. Our counting algorithm uses the same framework, but instead of outputting the discovered triangles it increments a count to be outputted upon program termination. The complexities of these methods are essentially identical.

\subsubsection{Framework}
Our baseline framework is based on edge iteration with the forward improvement \cite{triangle/latapy, triangle/schank}, a classic technique for triangle enumeration. This method relies on having an ordering metric for all nodes in the network which is defined as:

\begin{definition}
\textbf{Node Order} \cite{triangle/latapy, triangle/schank}: For any two nodes $u$ and $v$, we say $u$ is before $v$ ($u <_{no} v$) using the Node Order metric if:
\begin{enumerate}
    \item $deg(u) < deg(v)$
    \item $id(u) < id(v)$, if $deg(u) = deg(v)$
\end{enumerate}
\end{definition}

Node Order is a way of providing a deterministic order for all nodes in $G$ based on increasing degree, which is used to avoid redundant outputs and to generally speed up the enumeration process.

\IncMargin{0.5em}
\begin{algorithm}[h]
\SetKwInOut{Input}{Input}\SetKwInOut{Output}{Output}
\Input{$G$: Uncertain Signed Graph \\ $t$: Balanced/Unbalanced Probability Threshold}
\Output{{\color{blue} $\btrival{t}{G}$: Uncertain Balanced Triangle List}
\\ {\color{blue} $\utrival{t}{G}$: Uncertain Unbalanced Triangle List}
\\ {\color{violet} $|\btrival{t}{G}|$: Uncertain Balanced Triangle Count}
\\ {\color{violet} $|\utrival{t}{G}|$: Uncertain Unbalanced Triangle Count}}
\For{$u \in V_{no}$}{
\For{$v \in N_{>_{no}}(u)$}{
\ForEach{$w \in N_{>_{no}}(u) \cap N_{>_{no}}(v)$}{
\If{$P_{bal}(\trival{(u, v, w)}) \geq t$}{
{\color{blue} $\btrival{t}{G}.insert(\trival{(u, v, w)})$\;}
{\color{violet} $|\btrival{t}{G}|++$\;}
}
\If{$P_{unbal}(\trival{(u, v, w)}) > t$}{
{\color{blue} $\utrival{t}{G}.insert(\trival{(u, v, w)})$\;}
{\color{violet} $|\utrival{t}{G}|++$\;}
}
}
}
}
\caption{Baseline {\color{blue} UBTE} / {\color{violet} UBTC}}
\label{alg/baselineubtec}
\end{algorithm}
\DecMargin{0.5em}

In essence, we systematically enumerate each triangle and determine which uncertain category (balanced, unbalanced or unclassified) it belongs to. Algorithm \ref{alg/baselineubtec} details the method, with lines in blue being enumeration specific and lines in purple being counting specific. Lines in black are crucial for both methods. Firstly, the nodes are sorted by increasing Node Order (denoted by $V_{no}$) (Line 1) and that list is iterated through. For each node $u$ in $V_{no}$, each neighbour $v$ later in the order (the set $N_{>_{no}}(u)$) is checked. The common neighbours of $u$ and $v$ later than $v$ in the order (i.e. $w \in N_{>_{no}}(u) \cap N_{>_{no}}(v)$) topologically form a triangle with $u$ and $v$ (Lines 2-3). We can find this intersection in $O(deg_{>_{no}}(u) + deg_{>_{no}}(v))$ time assuming we do not have a pre-built hash table for the neighbours of each node or $O(\min\{deg_{>_{no}}(u), deg_{>_{no}}(v)\})$ if we do.

Each triangle is then checked to see if they satisfy the probability requirements to be balanced or unbalanced and the relevant containers are updated accordingly (Lines 4-9). For the pseudocode, we assume discovered triangles are stored in a data structure for enumeration purposes but in practice they can be outputted immediately. This method ensures no duplicate outputs as each triangle is only listed once due to enforcing Node Order.

\subsubsection{Analysis}
\begin{theorem}
    The time complexity of Algorithm \ref{alg/baselineubtec} is $O(|E|^{\frac{3}{2}})$
    \label{thm/baserun}
\end{theorem}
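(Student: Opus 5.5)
The plan is the standard forward-algorithm analysis of Latapy and Schank, plus the observation that the probability test costs only constant time per triangle.

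\textbf{Step 1: constant work per triangle.} Each enumerated triangle needs $P_{bal}$ and $P_{unbal} = 1 - P_{bal}$. These are fixed arithmetic expressions in three edge probabilities, so they take $O(1)$ time given $O(1)$ edge lookup, e.g.\ by storing $P_+(u,w)$ and $P_+(v,w)$ alongside the adjacency entries that the intersection already scans. Insertion or counter increment is also $O(1)$. Each triangle is enumerated exactly once, and a graph has at most $O(|E|^{3/2})$ triangles, so Lines 4--9 contribute $O(|E|^{3/2})$ in total. Sorting nodes by Node Order and building the oriented lists $N_{>_{no}}(\cdot)$ costs $O(|V| \log |V| + |E|)$, which is dominated once we assume no isolated vertices, so that $|V| = O(|E|)$.

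\textbf{Step 2: key lemma on out-degrees.} For every node $u$, $deg_{>_{no}}(u) = O(\sqrt{|E|})$. Suppose $deg(u) \le \sqrt{|E|}$; then the claim is immediate. Otherwise every $v \in N_{>_{no}}(u)$ has $deg(v) \ge deg(u) > \sqrt{|E|}$. Fewer than $2\sqrt{|E|}$ nodes can have degree above $\sqrt{|E|}$, since degrees sum to $2|E|$. Hence $deg_{>_{no}}(u) < 2\sqrt{|E|}$.

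\textbf{Step 3: cost of the intersections.} Use the hash-table variant, which costs $O(\min\{deg_{>_{no}}(u), deg_{>_{no}}(v)\})$ per oriented edge. Each oriented edge $(u,v)$ then costs at most $O(deg_{>_{no}}(u)) = O(\sqrt{|E|})$ by Step 2. There are exactly $|E|$ oriented edges, giving $O(|E|^{3/2})$.

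For the merge-based variant, the cost per edge is $O(deg_{>_{no}}(u) + deg_{>_{no}}(v))$. Both terms are $O(\sqrt{|E|})$ by Step 2, so the same bound holds, with no extra argument needed. I expect this step to be the main obstacle, but only in making sure the bound on $deg_{>_{no}}(v)$ is applied to both endpoints and not just $u$; Step 2 covers every node, so both are handled.

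Summing Steps 1 and 3 gives the claimed $O(|E|^{3/2})$ bound for both UBTE and UBTC.
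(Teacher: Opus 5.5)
Your proof is correct and takes the same route as the paper: the paper cites Schank's $O(|E|^{3/2})$ bound for forward edge iteration and notes that the balance/unbalance checks add only $O(1)$ work per enumerated triangle. The only difference is that you reprove the cited bound yourself via the out-degree lemma $deg_{>_{no}}(u) < 2\sqrt{|E|}$, which makes the argument self-contained, and you also make explicit the preprocessing cost and the $|V| = O(|E|)$ assumption that the paper leaves implicit.
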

\begin{proof}
As proven by Schank, the time complexity of forward edge iteration is $O(|E|^{\frac{3}{2}})$ \cite{triangle/schank}. Our modifications can each be done in $O(1)$ time meaning the complexity does not change.
\end{proof}

\begin{theorem}
    The memory complexity of Algorithm \ref{alg/baselineubtec} is $O(|E|)$
    \label{thm/basemem}
\end{theorem}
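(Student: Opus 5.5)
The plan is to account for every data structure that Algorithm \ref{alg/baselineubtec} keeps alive at once and show that each fits in $O(|E|)$ space, arguing about the counting variant UBTC in the same way as Theorem \ref{thm/baserun}. Two conventions are needed. First, the output lists $\btrival{t}{G}$ and $\utrival{t}{G}$ are excluded, since, as noted in the algorithm description, triangles can be emitted immediately rather than stored. Second, we assume $|V| = O(|E|)$, i.e.\ no isolated vertices, or equivalently we measure the input as $O(|V|+|E|)$.

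\textbf{Input storage.} The uncertain signed graph is stored as adjacency lists. Each edge $e(u,v)$ appears twice, and its value $P_+(e)$ is stored alongside each occurrence. The total size is therefore $\sum_{u} deg(u) = 2|E|$ entries plus $O(|V|)$ list headers, which is $O(|E|)$. The value $P_-(e) = 1-P_+(e)$ is computed on demand and needs no storage.

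\textbf{Auxiliary structures.} The node order $V_{no}$ is an array of $|V|$ node identifiers. It can be obtained by counting sort on degree, using $O(|V|)$ extra space. The forward neighbour sets $N_{>_{no}}(u)$ are sub-lists of the adjacency lists, so even if they are materialised explicitly they use at most $|E|$ entries. The optional per-node hash tables used for $O(\min\{\cdot\})$ intersection also store each forward neighbour a constant number of times, giving $O(|E|)$.

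\textbf{Working memory.} Inside the loops we store the indices $u$ and $v$, the current $w$, the two counters, and the value $P_{bal}(\trival{u,v,w})$. The last quantity is a closed-form expression in three stored probabilities, so it costs $O(1)$. The intersection on Line 3 is done by a merge-style scan, or by hash lookups, that examines each candidate $w$ as it is produced. It never materialises more than $O(deg(u))$ elements, and that is $O(|E|)$.

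Summing these parts gives $O(|E|)$. The main point to handle carefully is the output convention. If the enumeration lists were kept in memory they could hold $\Theta(|E|^{3/2})$ triangles, so the proof must explicitly invoke the streaming-output remark (or restrict to counting) to exclude them.
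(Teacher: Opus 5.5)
Your proof is correct and follows the same approach as the paper. The paper counts $O(|E|)$ for loading the network and $O(|V|)$ for the Node-Order list, and in the remark following its proof assumes streamed output (or a counter) and notes that pre-hashing adds another $O(|E|)$. You do the same accounting, and additionally make explicit the assumption $|V| = O(|E|)$ and the $O(1)$ per-iteration working memory, both of which the paper leaves implicit.
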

\begin{proof}
We require $O(|E|)$ space to load the network and $O(|V|)$ space to create a neighbour list of all nodes (sorted by Node Order).
\end{proof}

Assuming we output uncertain balanced/unbalanced triangles when we find them (or increase an integer value in the counting variation of the problem which would not change the complexity), this is the maximum memory required to perform a serial implementation. Note that if we opt for the method of speeding up list intersection by pre-hashing neighbourhood lists of nodes this takes an additional $O(|E|)$ memory which does not change the complexity but is a notable additional cost for very large networks.

\section{Improved Exact Solution}
In this section we expand on the baseline solution with improvements targeted at our specific problem setting. 

\subsubsection{Probability Theory}
We first take a look at leveraging probability theory for pruning and early termination of searches of unpromising edges. Before we continue any further, we have an additional critical observation regarding uncertain signed triangles which we will use in the future.

\begin{Lemma}
    Suppose we have a triangle $\Delta$ consisting of the edges $e_1, e_2, e_3$. The probability that $\Delta$ is balanced is equal to exactly the probability a triangle consisting of edges with positive sign probability $(1 - p_{+} (e_1)), (1 - p_{+} (e_2)), (1 - p_{+} (e_3))$ is unbalanced.
\label{lemma/revtri}
\end{Lemma}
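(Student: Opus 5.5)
The plan is a direct computation from the closed-form expression for $P_{bal}$ given in the Preliminaries, combined with a bijection between sign assignments. Write $p_i = P_+(e_i)$ and $q_i = 1 - p_i = P_-(e_i)$. Let $\Delta'$ be the triangle whose edges $e_1', e_2', e_3'$ have $P_+(e_i') = q_i$, so that $P_-(e_i') = p_i$. In words, replacing $\Delta$ by $\Delta'$ swaps the roles of the positive and negative probabilities on every edge.

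The structural observation is the bijection. Complementing every sign sends a sign configuration of $\Delta$ with $k$ positive edges to a configuration of $\Delta'$ with $3-k$ positive edges. Because $3$ is odd, $k$ and $3-k$ have opposite parity, so this map sends $T_1 \cup T_3$ onto $T_2 \cup T_0$. Next I would check that the map preserves probabilities. A configuration $\sigma$ of $\Delta$ has probability $\prod_i (p_i \text{ or } q_i)$, according to the sign of each edge. Its complement $\bar\sigma$ in $\Delta'$ assigns to each edge the opposite sign, whose probability in $\Delta'$ is exactly the same factor. Hence $P_\Delta(\sigma) = P_{\Delta'}(\bar\sigma)$.

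Summing over balanced $\sigma$ then gives $P_{bal}(\Delta) = P(\Delta' \in T_0 \cup T_2) = P_{unbal}(\Delta')$. As a sanity check against the displayed formula, I would also write the identity out explicitly:
\[
P_{bal}(\Delta) = p_1p_2p_3 + p_1q_2q_3 + q_1p_2q_3 + q_1q_2p_3 .
\]
Substituting $p_i \leftrightarrow q_i$ turns each of these terms into a product for $\Delta'$ with zero or two positive edges. Specifically, $q_1q_2q_3$ corresponds to $T_0$, and the other three terms correspond to the three members of $T_2$. These are precisely the four terms of $P_{unbal}(\Delta')$.

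There is no real obstacle here. The only point needing care is the parity argument, which relies on the triangle having an odd number of edges so that complementing all signs flips balance. I would state that explicitly so the reader sees why the lemma is specific to odd cycles.
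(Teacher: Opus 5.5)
Your argument is correct and is essentially the paper's: the paper expands $P_{unbal}(\Delta')$ into its four terms and substitutes $P_+(e_i') = 1 - P_+(e_i)$ to recover $P_{bal}(\Delta)$. That computation is your complement bijection written out term by term, and your framing adds the useful point that the identity depends on the triangle having an odd number of edges. One small notational slip in your sanity check: with $q_i = P_-(e_i) = P_+(e_i')$ as you defined it, $q_1q_2q_3$ is the all-positive ($T_3$) term of $\Delta'$, while the $T_0$ term of $\Delta'$ is $p_1p_2p_3 = \prod_i P_-(e_i')$, so your swap $p_i \leftrightarrow q_i$ has to be read as relabelling into $\Delta'$'s own notation rather than as a literal substitution.
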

\begin{proof}
    Let $P_+(e_1') = 1 - P_+(e_1), P_+(e_2') = 1 - P_+(e_2), P_+(e_3') = 1 - P_+(e_3)$. We can rearrange each equation for $P_+(e_i) = 1 - P_+(e_i')$ appropriately. For a triangle $\Delta'_{(e_1', e_2', e_3')}$, the probability this triangle is unbalanced can be derived as:

\begin{equation*}
\begin{split}
P_{unbal}(\Delta')& = ((1 - P_+(e_1')) * (1 - P_+(e_2')) * (1 - P_+(e_3'))) \\ 
& + (P_+(e_1') * P_+(e_2') * (1 - P_+(e_3'))) \\ 
& + (P_+(e_1') * (1 - P_+(e_2')) * P_+(e_3')) \\ 
& + ((1 - P_+(e_1')) * P_+(e_2') * P_+(e_3')) \\
& = (P_+(e_1) * P_+(e_2) * P_+(e_3)) \\ 
& + ((1 - P_+(e_1)) * (1 - P_+(e_2)) * P_+(e_3)) \\ 
& + ((1 - P_+(e_1)) * P_+(e_2) * (1 - P_+(e_3))) \\
& + (P_+(e_1) * (1 - P_+(e_2)) * (1 - P_+(e_3))) \\ 
& = P_{bal}(\Delta)
\end{split}
\end{equation*}

\end{proof}

Lemma \ref{lemma/revtri} is a useful observation as it means all analysis we do for finding uncertain balanced triangles holds equally true for uncertain unbalanced triangles for the same value of $t$. Essentially, each uncertain balanced triangle can be mapped to an uncertain unbalanced triangle with the `reversed' probabilities.

Naturally, edges with a probability closer to $0.5$ are unlikely to be a part of a triangle with a high probability of being balanced or unbalanced. We first determine the exact upper and lower bounds of a triangle being balanced.

\begin{Lemma}
For any edge $e \in E$, the probability of any triangle containing $e$ being balanced or unbalanced is in the range \\ $[\min\{P_+(e), 1 - P_+(e)\}, \max\{P_+(e), 1 - P_+(e)\}]$
\label{lemma/probrange}
\end{Lemma}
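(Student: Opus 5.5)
Fix an edge $e$ with $p = P_+(e)$. Let the other two edges of the triangle be $e_2, e_3$. Since the balance status depends only on the parity of the number of positive edges, I would condition on the sign of $e$. Let $q$ denote the probability that exactly one of $e_2, e_3$ is positive, i.e.\ that $e_2,e_3$ contain an odd number of positive edges. Then $1-q$ is the probability that they contain an even number. The triangle is balanced exactly when the total number of positive edges is odd. This gives
\begin{equation*}
P_{bal}(\Delta) = p(1-q) + (1-p)q ,
\end{equation*}
which I would check directly against the four-term expansion of $P_{bal}$ in Section 3 by grouping the terms according to whether $e$ is positive. The key observation is that $P_{bal}(\Delta)$ is a convex combination of $p$ and $1-p$, with weights $1-q$ and $q$, where $q \in [0,1]$ because it is a probability.

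A convex combination of two reals lies between them. Hence $P_{bal}(\Delta) \in [\min\{p,1-p\}, \max\{p,1-p\}]$. For the unbalanced case, $P_{unbal}(\Delta) = 1 - P_{bal}(\Delta) = pq + (1-p)(1-q)$, which is again a convex combination of $p$ and $1-p$ with weights $q$ and $1-q$. So it lies in the same interval, since the interval is symmetric under $x \mapsto 1-x$. Alternatively, the unbalanced case follows from Lemma \ref{lemma/revtri}: it turns the bound for the balanced probability of the reversed triangle into the bound here, because replacing $p$ by $1-p$ leaves the interval unchanged.

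To show the bounds are exact and not merely bounds, I would note the extreme cases. Taking $q=0$ (e.g.\ $P_+(e_2)=P_+(e_3)=1$) gives $P_{bal}=p$, and $q=1$ (e.g.\ $P_+(e_2)=1$, $P_+(e_3)=0$) gives $P_{bal}=1-p$. So both endpoints are attained.

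The main obstacle is really just the bookkeeping in deriving the conditioning identity from the explicit formula. It relies on independence of the edge signs, which is implicit in the product form of $P_{bal}$ given in Section 3. Once the identity is verified, the rest is immediate.
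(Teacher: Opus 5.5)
Your proof is correct and rests on the same observation as the paper's --- flipping the sign of $e$ toggles balance, so once the other two edges' signs are fixed the triangle is balanced with probability exactly $p$ or $1-p$. Your identity $P_{bal}(\Delta) = p(1-q) + (1-p)q$ writes out the averaging over those two edges that the paper leaves implicit. It therefore gives both endpoints at once, where the paper informally argues only the maximum and recovers the other side through complementation and Lemma~\ref{lemma/revtri}. You are also right that independence of the signs is needed (the paper assumes it implicitly through its product formula) and that both endpoints are attained.
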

\begin{proof}
    Let us consider the scenario in which we are looking for uncertain balanced triangles.

Let us assume we have a triangle consisting of $e$ and two other edges $e_i$ and $e_j$. Notice that if we have a balanced triangle, flipping the sign of any of its edges results in an unbalanced triangle (and vice versa). Thus, regardless of the probability of the product of the positive sign probabilities of $e_i$ and $e_j$, the maximum probability of a triangle containing $e$ being balanced is $P_+(e)$ in the case that a positive sign would create a balanced triangle or $1 - P_+(e)$ in the case a negative sign would create a balanced triangle.

By extending Lemma \ref{lemma/revtri}, we can infer the same is true for the maximum probability of all unbalanced triangle containing $e$.
\end{proof}

Take the extreme example for a triangle with two edges which are known to be positive ($p = 1$), the probability that the triangle is balanced or unbalanced is dependent exactly on the probability the third edge is positive. The immediate benefit of Lemma \ref{lemma/probrange} is we can prune edges with a probability in the range $[\min\{1 - t, t\}, \max\{1 - t, t\}]$, as they cannot possible form a valid output.

\subsubsection{Absolute Edge Ordering}

As mentioned in the previous section, the forward edge iteration method requires a Node Order to ensure efficiency and avoid redundant outputs in the normal graph setting, where all triangles matter to the output. However for our problem we have seen there is significant pruning power in sign probabilities to avoid unclassified triangles.

\begin{figure}
\centering
\includegraphics[height = 4cm]{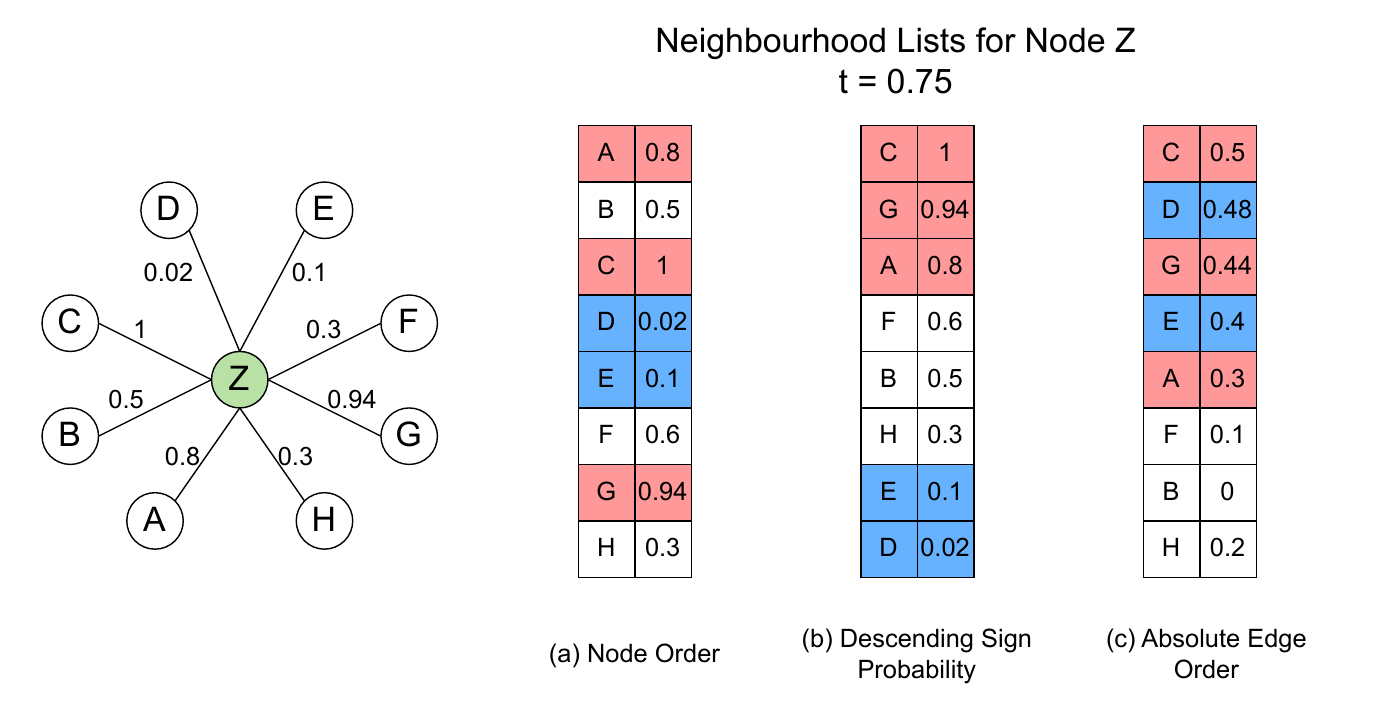}
\caption{An example of different ordering options for the neighbourhood list of node $Z$. (a) is the baseline of sorting by Node Order, (b) is sorted by descending sign probability and (c) is sorted by Absolute Edge Order. Boxes in red and blue are edges which are likely to be positive and negative, respectively, and satisfy the threshold pruning requirement.}
\label{fig/ballist}
\end{figure}

Let us consider the example graph in Figure \ref{fig/ballist}, where we wish to determine efficient ways of ordering the neighbours of node $Z$ which would allow us to prune effectively. On this example where $t = 0.75$, via Lemma \ref{lemma/probrange} we know that we can ignore all edges with positive sign probability in the range $(0.25, 0.75)$. If we were to order the neighbours of $Z$ via Node Order (Fig \ref{fig/ballist} (a)), as per our baseline, we would still have to scan through the list taking $O(deg(Z))$ time.
 
As we have seen, taking into account the information provided by the sign probabilities which is highly useful for pruning, an immediately improved edge order would be based around sorting by positive edge sign probability. Take the example list sorted this way in Fig \ref{fig/ballist} (b), in which the neighbourhood list has been split into three distinct regions. The red and blue cells indicate edges which are likely to be positive and negative, respectively, which cannot be pruned by Lemma \ref{lemma/probrange}. Most notably, the white cells in the list indicate a section which can be pruned when searching any edges associated with $Z$ for our two problems.

Whilst this ordering method defines a clear zone which does not need to be searched, requiring two pointers (starting from the top and bottom) for each list still adds a level of additional implementation as opposed to scanning a sorted list. Additionally, future improvements which we will implement with two pointers would require additional comparisons between the values at the two pointers which adds additional practical runtime. As such we propose a further improved ordering metric specific to our problem setting.

\begin{definition}
\textbf{Absolute Edge Value ($a(e)$):} The Absolute Edge Value of an edge is $a(e) = |P_+(e) - 0.5|$.
\end{definition}

\begin{definition}
\textbf{Absolute Edge Order:} For any two edges $e_1(u_1, v_1)$ and $e_2(u_2, v_2)$, we say $e_1$ is before $e_2$ ($e_1 <_{aeo} e_2$) using the Absolute Edge Order metric if:
\begin{enumerate}
    \item $a(e_1) > a(e_2)$
    \item $id(u_1) < id(u_2)$, if $a(e_1) = a(e_2)$
    \item $id(v_1) < id(v_2)$, if $a(e_1) = a(e_2)$ and $id(u_1) = id(u_2)$
\end{enumerate}
\end{definition}

Absolute Edge Order leverages the fact that the edges most likely to contribute to a valid output are the ones with positive sign probabilities closest to 0 or 1. Note that the range of potential Absolute Edge Values is between 0 and 0.5 as opposed to 0 and 1. Figure \ref{fig/ballist} (c) demonstrates the neighbourhood list of $Z$ using this new sorting method (noting that the equivalent to $t$ for this list is $a(t) = |t - 0.5|$) and illustrates how it combines both edges with high and low positive sign probabilities into an absolute ranking.

In the baseline framework the order we discovered triangles was based on the degree of its nodes (Line 1, Algorithm \ref{alg/baselineubtec}). Given our new metric in our improved method we can instead iterate through the network based on edges (sorted by Absolute Edge Order). This can further reduce the search space when considering the neighbourhood lists of the two nodes connecting a given edge.

\begin{Lemma}
    When finding uncertain balanced or unbalanced triangles containing edge $e(u, v)$, given that we are searching via Absolute Edge Order, we only need to check edges $e'$ connected to $u$ or $v$ which satisfy the requirement $a(P_+(e')) \in \left[ a(\frac{2P_+(e)^2 - 2P_+(e) + t}{4P_+(e)^2-4P_+(e) + 1}) , a(P_+(e))\right]$.
\label{lemma/newcutoff}
\end{Lemma}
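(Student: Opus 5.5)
The plan is to rewrite $P_{bal}$ in centred coordinates, so that the threshold condition becomes a condition on the product of the three Absolute Edge Values. For each edge $e_i$ of $\Delta$ let $x_i = P_+(e_i) - \tfrac12$, so that $a(e_i) = |x_i|$. Introduce independent sign variables $s_i \in \{+1,-1\}$ with $s_i = +1$ meaning positive; then $\mathbb{E}[s_i] = 2x_i$.

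With three edges, the triangle is balanced (an odd number of positives) exactly when the number of negatives is even, i.e. when $s_1 s_2 s_3 = +1$. Hence
\[
P_{bal}(\Delta) = \tfrac12\bigl(1 + \mathbb{E}[s_1 s_2 s_3]\bigr) = \tfrac12 + 4x_1x_2x_3 ,
\]
using independence. One can double-check this by expanding the four-term formula for $P_{bal}$ from the Preliminaries. It follows that $P_{bal}(\Delta) \geq t$ forces
\[
4\,a(e_1)a(e_2)a(e_3) \geq t - \tfrac12 = a(t),
\]
using $t \geq 0.5$.

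For unbalanced triangles, $P_{unbal}(\Delta) = \tfrac12 - 4x_1x_2x_3$, so $P_{unbal}(\Delta) > t$ gives the same necessary condition (even strictly). Alternatively, Lemma \ref{lemma/revtri} maps unbalanced triangles to balanced ones with reversed probabilities, and reversal preserves every $a(\cdot)$. So it suffices to treat the balanced case.

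Next comes the ordering argument. Since edges are processed in Absolute Edge Order, when we process $e(u,v)$ we only need triangles in which $e$ is the first edge in $<_{aeo}$. Any triangle containing an earlier edge was already found when that edge was processed. So the two other edges $e', e''$ satisfy $a(e'), a(e'') \leq a(e)$, which is the upper end of the interval. Plugging $a(e'') \leq a(e)$ into the product condition gives
\[
4\,a(e)^2\,a(e') \geq a(t), \qquad\text{so}\qquad a(e') \geq \frac{a(t)}{4a(e)^2}.
\]
If $a(e) = 0$, or more generally $a(e) < a(t)$, then $e$ is already pruned by Lemma \ref{lemma/probrange}, so we may assume $a(e) > 0$.

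The final step is the algebraic identification of this lower bound with the stated expression. Writing $p = P_+(e)$, we have $4p^2 - 4p + 1 = (2p-1)^2 = 4a(e)^2$ and $2p^2 - 2p + t = 2(p-\tfrac12)^2 + (t - \tfrac12)$. Therefore
\[
\frac{2p^2-2p+t}{4p^2-4p+1} = \tfrac12 + \frac{t-\tfrac12}{4a(e)^2},
\]
and its absolute edge value is exactly $\frac{a(t)}{4a(e)^2}$, since $t \geq \tfrac12$. Applying the same bound to $e''$ by symmetry shows that every edge $e'$ at $u$ or $v$ that can participate lies in the stated interval.

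The main obstacle is the first step: recognising the clean product form $\tfrac12 + 4x_1x_2x_3$, and taking care that the sign of $x_1x_2x_3$ may be negative. This is handled because we only need a necessary condition on absolute values. The rest is bookkeeping about which edge is "first" in $<_{aeo}$.
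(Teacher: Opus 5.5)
Your proof is correct, and it reaches the bound by a different and cleaner route than the paper.

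The paper never puts $P_{bal}$ in closed form. For $p=P_+(e)\ge 0.5$ it does the following:
\begin{itemize}
\item it fixes the second edge at the extreme value $p$ (class $T_3$) and solves the linear equation $P_{bal}=t$ for the third edge's probability $x$;
\item it repeats this with the second edge at $1-p$ (class $T_1$) to get $y$, and notices that $a(x)=a(y)$;
\item it handles $p<0.5$ and unbalanced triangles by ``similarly'' and Lemma~\ref{lemma/revtri}.
\end{itemize}

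Your identity $P_{bal}(\Delta)=\tfrac12+4x_1x_2x_3$ collapses all of these cases into the single necessary condition $4a(e_1)a(e_2)a(e_3)\ge a(t)$. This buys several things over the paper's argument:
\begin{itemize}
\item It justifies a step the paper only asserts, namely that putting the second edge at $a(e)$ is the worst case. That is just monotonicity of the product.
\item It explains why $a(x)=a(y)$ rather than observing it after the fact.
\item It covers unbalanced triangles directly through $P_{unbal}=\tfrac12-4x_1x_2x_3$.
\item It avoids the slips in the paper's intermediate algebra (e.g.\ $4p^2+4p+1$ written for $4p^2-4p+1$).
\item It shows the stated lower endpoint is simply $a(t)/(4a(e)^2)$.
\item It gives a sharper pairwise test, $a(e'')\ge a(t)/(4a(e)a(e'))$, once a second edge $e'$ is fixed.
\end{itemize}
What the paper's computation offers in exchange is concreteness: it derives the stated rational expression straight from the four-term definition of $P_{bal}$ and ties each extreme to a triangle class.

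One minor slip in your aside: $a(e)=0$ is not a special case of $a(e)<a(t)$ when $t=\tfrac12$. In that case such triangles have $P_{bal}=\tfrac12\ge t$ and are valid outputs, not prunable ones. However, the lemma's expression is $0/0$ there, so your argument is unaffected wherever the statement is defined.
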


\begin{proof}
    Given we search by descending Absolute Edge Order, we know that for $e$ we do not need to recheck any edges with a higher Absolute Edge Value as any valid triangles containing them will have already been discovered due to the order of the search. This provides us the upper bound $|a(P_+(e))|$. 

For the lower bound, let us first consider searching for uncertain balanced triangles. Let $p = P_+(e)$, which is the upper probability of any edge we will search. Suppose there exists a triangle $A$ containing $e$. Assuming that $p \geq 0.5$, the maximum probability for any edge in $T$ (assuming Absolute Edge Order-based search) is $p$, meaning the other two edges have probabilities at most $p$. If we fix the probabilities of one other edge on the triangle to $p$ (along with $e$), we can calculate the minimum probability $x$ ($x$ must $\geq 0.5$ in order to satisfy classification $T_3$) required for the third edge such that $P_{bal}(A) \geq t$ via:
    
\begin{equation*}
\begin{split}
P_{bal}(A) &= t \\
p^2x + (1-x)(1-p)p + (1-x)(1-p)p + x(1-p)^2 & = t \\
p^2x + 2(p - p^2 -px + p^2x) + (1 - 2p + 2p^2)x &= t \\
4p^2x + 4px + x - 2p^2 + 2p & = t \\
x(4p^2 + 4p + 1) &= 2p^2 - 2p + t \\
\frac{2p^2 - 2p + t}{4p^2 - 4p + 1} &= x\\
\end{split}
\end{equation*}

Alternatively, suppose there exists a triangle $B$, the minimum probability for any edge in $B$ is $1 - p$ (due to Absolute Edge Order). This time given that one edge is leaning positive and the other is leaning negative we find the maximum probability $y$, which must be leaning negative as well (i.e. $y \leq 0.5$), in order to form an uncertain balanced triangle with classification $T_1$:

\begin{equation*}
\begin{split}
P_{bal}(B) &= t \\
p(1-p)y + (1-y)(1-p)^2 + (1-y)p^2 + p(1-p)y & = t \\
\frac{-2p^2 + 2p + t - 1}{-4p^2 + 4p - 1} &= y
\end{split}
\end{equation*}

As such, we have identified the range of any valid edge which can form an uncertain balanced triangle containing $e$. Notably, $a(x) = a(y)$, meaning that $a(\frac{2p^2 - 2p + t}{4p^2-4p + 1})$ is sufficient to represent both balanced triangle classes $T_1$ and $T_3$. If the value of $a(\frac{2p^2 - 2p + t}{4p^2-4p + 1})$ is greater than 0.5 or less than 0 that means no triangle containing $e$ can exist as a valid uncertain balanced or unbalanced triangle for this $t$ value.

We can similarly prove that the same holds true for all $p < 0.5$ as well as unbalanced triangles via repeating the above proof or by extending Lemma \ref{lemma/revtri}.
\end{proof}

\subsubsection{Framework}

\IncMargin{0.5em}
\begin{algorithm}[h]
\SetKwInOut{Input}{Input}\SetKwInOut{Output}{Output}
\Input{$G$: Uncertain Signed Graph \\ $t$: Balanced/Unbalanced Probability Threshold}
\Output{{\color{blue} $\btrival{t}{G}$: Uncertain Balanced Triangle List}
\\ {\color{blue} $\utrival{t}{G}$: Uncertain Unbalanced Triangle List}
\\ {\color{violet} $|\btrival{t}{G}|$: Uncertain Balanced Triangle Count}
\\ {\color{violet} $|\utrival{t}{G}|$: Uncertain Unbalanced Triangle Count}}
\For{$e_i(u, v) \in E_{aeo}$}{
\If{$a(e) < a(t)$}{
Break\;
}
$p \leftarrow a(\frac{2p^2 - 2p + t}{4p^2-4p + 1})$\;
\ForEach{$w \in N_{>_{aeo}, >p}(u) \cap N_{>_{aeo}, >p}(v)$}{
\If{$P_{bal}(\trival{(u, v, w)}) \geq t$}{
{\color{blue} $\btrival{t}{G}.insert(\trival{(u, v, w)})$\;}
{\color{violet} $|\btrival{t}{G}|++$\;}
}
\If{$P_{unbal}(\trival{(u, v, w)}) > t$}{
{\color{blue} $\utrival{t}{G}.insert(\trival{(u, v, w)})$\;}
{\color{violet} $|\utrival{t}{G}|++$\;}
}
}
}
\caption{Improved {\color{blue} UBTE} / {\color{violet} UBTC}}
\label{alg/improvedubtec}
\end{algorithm}
\DecMargin{0.5em}

We now provide the framework for our improved Uncertain Balanced and/or Unbalanced Triangle Enumeration/Counting algorithm in Algorithm \ref{alg/improvedubtec}, with once again the enumeration specific operations in blue and the counting specific operations in purple. We go through each edge in $G$, sorted by Absolute Edge Order (Line 1) and terminate our algorithm after an edge $e(u, v)$ appears where $a(e) < a(t)$ due to Lemma \ref{lemma/probrange} (Line 2). For $e$, we determine the lower bound of Absolute Edge Values (denoted as $p$) of edges which could possibly form a valid solution via Lemma \ref{lemma/newcutoff} and find all common neighbours of $u$ and $v$ connected by edges which are after $e$ in the Absolute Edge Order but also have an Absolute Edge Value $\geq p$ denoted by the neighbourhoods $N_{>_{aeo}, >p}(u), N_{>_{aeo}, >p}(v)$ respectively (Lines 4-5). We then update data containers if the resulting triangle is uncertain balanced or unbalanced (Lines 6-11).

\subsubsection{Adaption to Top-$k$ Search}
Another highly valid query is finding the top-$k$ balanced and unbalance triangle on the network or locally for each node. Our framework can easily be adapted to this problem, given that out algorithm targets Uncertain Triangles with the highest and lowest balance/unbalance probabilities. The only requirement is that once $k$ triangles have been found, the required $t$ value can be adjusted to $t_{new}$ (equivalent to the lowest balanced/highest unbalanced probability for a triangle in the current output set) to further prune the space via $a(t_{new})$. The complexity analysis is unchanged.

\subsubsection{Analysis}

\begin{theorem}
    The time cost of Algorithm \ref{alg/improvedubtec} is \\ $O(\sum_{e(u, v) \in E_{a(e)>a(t)}}(deg_{>_{aeo}, >p}(u)+deg_{>_{aeo},>p}(v)))$.
\label{thm/imprTime}
\end{theorem}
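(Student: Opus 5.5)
The plan is to charge the running time of Algorithm \ref{alg/improvedubtec} to its loop iterations, one edge at a time, and then sum. I would split the cost into three parts: the preprocessing, the outer loop over $E_{aeo}$ with its early termination, and the inner neighbourhood intersection on Line 5. The statement bounds the time cost apart from preprocessing. Sorting $E$ by Absolute Edge Order and building the per-node neighbour lists in that order is a one-off $O(|E|\log|E|)$ step, and I would state it separately.

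For the outer loop, I would first invoke Lemma \ref{lemma/probrange}. Edges are processed in descending $a(e)$, and the loop breaks at the first edge with $a(e) < a(t)$. So the body runs only for edges in $E_{a(e)>a(t)}$, plus $O(1)$ for the single edge that triggers the break. This is what restricts the summation range in the bound. For each such edge $e(u,v)$, Line 4 evaluates the cutoff from Lemma \ref{lemma/newcutoff} in $O(1)$ arithmetic, which is absorbed into the per-edge term.

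The main step is bounding the intersection on Line 5. Assume each node's adjacency list is stored sorted by Absolute Edge Order, which is descending absolute edge value. Then the entries of $N_{>_{aeo},>p}(u)$ form a contiguous block. The block starts just after the position of $e$ in $u$'s list, which can be located in $O(1)$ by storing each edge's index in both endpoint lists. It ends at the first entry whose absolute value falls below the cutoff $p$. Lemma \ref{lemma/newcutoff} guarantees that no edge outside this window can complete a valid triangle. The same holds for $v$.

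A merge-style intersection of two windows scans each window at most once. It therefore costs $O(deg_{>_{aeo},>p}(u)+deg_{>_{aeo},>p}(v))$ and stops as soon as either pointer leaves its window. Each common neighbour $w$ found needs $O(1)$ further work: compute $P_{bal}$ from three stored probabilities, then do one insertion or counter increment. The number of such $w$ is at most the size of the smaller window, so this work is dominated by the same term.

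The step I expect to be delicate is the intersection itself. The two windows are ordered by edge value, not by neighbour id, so a plain sorted merge over ids does not apply directly. I would handle this by marking the window of the smaller side in a hash set or timestamp array and then scanning the other window. That costs time linear in the combined window sizes, which still gives the claimed term. Summing the per-edge cost over $e \in E_{a(e)>a(t)}$ then yields the bound in Theorem \ref{thm/imprTime}.
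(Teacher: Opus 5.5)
Your proposal is correct and follows the paper's argument. The early break restricts the outer loop to edges with $a(e)\ge a(t)$, which is how the paper reads $E_{a(e)>a(t)}$ since the break test is strict; each such edge is then charged the cost of intersecting its pruned windows $N_{>_{aeo},>p}(u)$ and $N_{>_{aeo},>p}(v)$, and these costs are summed. You also note that the windows are contiguous in the AEO-sorted lists but not sorted by node id, so the linear-time intersection needs a timestamp array or hash rather than a plain merge; this makes explicit a detail the paper's proof glosses over (the paper additionally states the $O(\min\{\cdot,\cdot\})$ variant with pre-hashed lists).
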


\begin{proof}
    We only iterate through edges in which their absolute value is $\geq |t - 0.5|$, which we denote as the set $E_{a(e)>a(t)}$. For each edge $e(u, v)$ which satisfies that requirements, we find the common neighbours of $u$ and $v$ which satisfy the search range of Lemma \ref{lemma/newcutoff} (denoted by $N_{>_{aeo},>p}$ thus having a degree of $deg_{>_{aeo}, >p}$ each). If we do not pre-hash each neighbourhood for each edge this takes $O(deg_{>_{aeo}, >p}(u)+deg_{>_{aeo}, >p}(v))$ time or \\$O(\min\{deg_{>_{aeo}, >p}(u), deg_{>_{aeo}, >p}(v)\})$ time if we do.
\end{proof}

The time cost of Algorithm \ref{alg/improvedubtec} is no worse than the baseline (Algorithm \ref{alg/baselineubtec}), as the worst case in which the improved solution prunes no additional space (due to either a significant number of certain edges or a low $t$ value) should result in theoretically the same search time complexity. Given the nature of the problem, it is assumed the network is stored in a sorted by Absolute Edge Order in order to facilitate multiple queries on the same dataset. If not pre-sorted, we can use any sorting method to sort into Absolute Edge Order, taking $O\sum_{e \in E}(log(deg(e))$ time.

\begin{theorem}
    The memory cost of Algorithm \ref{alg/improvedubtec} is $O(|E|)$.
    \label{thm/imprmem}
\end{theorem}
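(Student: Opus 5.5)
The plan is to account for every data structure Algorithm \ref{alg/improvedubtec} keeps alive, in the same style as the proof of Theorem \ref{thm/basemem}, and to show each one is $O(|E|)$ or smaller. The pieces are the input graph, the edge list sorted by Absolute Edge Order, the neighbourhood lists, the intersection workspace, and the counters or outputs. I will argue that none of them exceeds $O(|E|)$, and that no piece is created per edge and then kept.

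First, loading $G=(V,E,P_+)$ takes $O(|V|+|E|)$ space: one probability value for each edge, and adjacency lists whose total length is $\sum_u deg(u)=2|E|$. Isolated vertices can be ignored, which gives $O(|E|)$. The list $E_{aeo}$ is a permutation of $E$, so it costs $O(|E|)$ as well, and any in-place or standard sort into this order stays within that bound.

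Next come the neighbourhood lists $N_{>_{aeo}}(u)$. These are each $N(u)$ re-sorted by Absolute Edge Order. The restricted sets $N_{>_{aeo},>p}(u)$ are never materialised; they are just contiguous prefix/suffix ranges of these sorted lists, delimited by the current edge and by the cutoff $p$ from Lemma \ref{lemma/newcutoff}. So the sorted lists again total $O(|E|)$, and each cutoff costs only $O(1)$ scalars per iteration. If the optional pre-hashing is used, we add one hash table per node of size $deg(u)$, which is $O(|E|)$ extra.

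Per-edge work also fits in the bound. The intersection on Line 5 uses two pointers or hash lookups, needing $O(1)$ workspace. Alternatively, if the result is buffered, it needs at most $O(\min\{deg(u),deg(v)\})\subseteq O(|E|)$, and this space is reused across iterations. Evaluating $P_{bal}$ and $P_{unbal}$ takes $O(1)$. In the counting variant the outputs are two integers. In the enumeration variant, as stated after Theorem \ref{thm/basemem}, triangles are emitted immediately rather than stored.

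The only real subtlety is this output assumption. If the lists $\btrival{t}{G}$ and $\utrival{t}{G}$ were retained, memory would be $O(|E|+\text{output})$, which can be as large as $O(|E|^{3/2})$. I would therefore state explicitly that the bound holds under the same streaming-output convention used for the baseline. Summing the pieces then gives $O(|E|)$.
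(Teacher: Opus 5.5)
Your proposal is correct and follows the paper's proof, which simply notes that loading the network and building the neighbour lists ordered by Absolute Edge Order take $O(|E|)$ space; your itemised accounting and explicit streaming-output caveat (the same convention stated after Theorem~\ref{thm/basemem}) make that argument more careful. One small aside: a two-pointer merge cannot directly intersect lists sorted by Absolute Edge Order, because $e(u,w)$ and $e(v,w)$ have different keys. Your hash-lookup alternative covers this, and you already charged its tables at $O(|E|)$, so the bound is unaffected.
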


\begin{proof}
    We require $O(|E|)$ space to load the network and create a neighbour list of all nodes (ordered by Absolute Edge Order)
\end{proof}

The memory complexity of the improved solution is the same as the baseline, though in practice it may take slightly more memory due to the introduction of Absolute Edge Value information for each edge.

\section{Sampling-Based Approximate Solutions for Counting}
Whilst the exact solutions for Uncertain Balanced/Unbalanced Triangle Counting are efficient they are still slow for huge networks. Often for these giant networks there is minimal need to know the exact count and more often a close approximation is sufficient for analytical purposes. In this section we explore unbiased methods of sampling to approximate the Uncertain Balanced or Unbalanced Counts.

\subsection{Vertex-Based Sampling}
\begin{algorithm}
\SetKwInOut{Input}{Input}\SetKwInOut{Output}{Output}
\Input{$G$: Uncertain Signed Graph \\ $t$: Balanced/Unbalanced Probability Threshold \\ $k$: Number of Samples}
\Output{$\tilde{|\btrival{t}{G}|}$: Approximate Uncertain Balanced Triangle Count \\
$\tilde{|\utrival{t}{G}|}$: Approximate Uncertain Unbalanced Triangle Count}
$i \leftarrow 0$\;
\While{$i \leq k$}{
$v \leftarrow$ Uniform Random Sample from $V$\;
$|\btri{t}(v)|, |\utri{t}(v)| \leftarrow VertexLocalCount(v)$\;
$|\tilde{\btrival{t}{G}}| \leftarrow \frac{i*|\tilde{\btrival{t}{G}}| + \frac{|\btri{t}(v)|*|V|}{3}}{i + 1}$\;
$|\tilde{\utrival{t}{G}}| \leftarrow \frac{i*|\tilde{\utrival{t}{G}}| + \frac{|\utri{t}(v)|*|V|}{3}}{i + 1}$\;
$i++$\;
}
\caption{Baseline Vertex Sampling Algorithm}
\label{alg/vsamp}
\end{algorithm}

Algorithm \ref{alg/vsamp} details the vertex sampling framework, which is a standard framework. For $k$ samples, the algorithm samples $k$ nodes from $G$ and for each determines the local uncertain balanced and unbalanced triangle counts ($|\btri{t}(v)|, |\utri{t}(v)|$) relating to that node (Lines 1-4). $\frac{|\btri{t}(v)|*|V|}{3}$ is a method of extrapolating the local balanced count to the entire graph (assumes that each node has the derived local count) and a running average is calculated for the balanced and unbalanced estimators (Lines 5-6).
We used a simple baseline vertex sampling framework (Algorithm \ref{alg/vsamp} which samples a random vertex uniformly $k$ times (without replacement) and determines the local uncertain balanced and unbalanced counts ($|\btri{t}(v)|, |\utri{t}(v)|$) for each vertex. $\frac{|\btri{t}(v)|*|V|}{3}$ extrapolated the local balanced count to the entire graph (assumes that each node has the derived local count) and a running average is calculated, with the $\frac{|\utri{t}(v)|*|V|}{3}$ used for uncertain counts.

This framework determines the local uncertain balanced and unbalanced counts by finding all triangles containing $v$. Given we have already proposed techniques geared towards our specific problem setting, we can adopt the improved solutions when appropriate.

\IncMargin{0.5em}
\begin{algorithm}[h]
\SetKwInOut{Input}{Input}\SetKwInOut{Output}{Output}
\Input{$G$: Uncertain Signed Graph \\ $t$: Balanced/Unbalanced Probability Threshold \\ $q$: Query Vertex}
\Output{$|\btri{t}(q)|$: Local Uncertain Balanced Triangle Count
\\ $|\utri{t}(q)|$: Local Uncertain Unbalanced Triangle Count}
\For{$u \in N_{>_{aeo}}(q)$}{
\If{$a(P_+(e(q, u)))< a(t)$}{
\Return{$|\btri{t}(v)|$, $|\utri{t}(v)|$}\;
}
$p \leftarrow a(\frac{2p^2 - 2p + t}{4p^2-4p + 1})$\;
\ForEach{$v \in N_{>p}(q) \cap N_{>_{aeo},>t}(u)$}{
\If{$P_{bal}(\Delta_{(q, u, v)}) \geq t$}{
$|\btri{t}(q)|++$\;
}
\If{$P_{unbal}(\Delta_{(q, u, v)}) > t$}{
$|\utri{t}(q)|++$\;
}
}
}
\caption{Improved Vertex Local Search}
\label{alg/improvedvsamp}
\end{algorithm}
\DecMargin{0.5em}

Algorithm \ref{alg/vsamp} details the vertex sampling framework, which is a standard framework. For $k$ samples, the algorithm samples $k$ nodes from $G$ and for each determines the local uncertain balanced and unbalanced triangle counts ($|\btri{t}(v)|, |\utri{t}(v)|$) relating to that node (Lines 1-4). $\frac{|\btri{t}(v)|*|V|}{3}$ is a method of extrapolating the local balanced count to the entire graph (assumes that each node has the derived local count) and a running average is calculated for the balanced and unbalanced estimators (Lines 5-6).

To determine the local uncertain balanced and unbalanced counts we can utilise the structure of finding all triangles containing $v$ and checking each triangle much like our exact baseline. However, given we have already examined techniques geared towards our specific problem setting, we can adopt those when appropriate.

\begin{theorem}
    The time complexity of Algorithm \ref{alg/improvedvsamp} is $O(deg_{>t}(q) * (deg_{>t}(q) + maxdeg_{>t}))$, where $maxdeg_{>t}$ is the maximum number of edges connected to a node in $G$ with an Absolute Edge Value $\geq a(t)$.
\label{thm/vSampTime}
\end{theorem}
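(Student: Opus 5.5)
We bound the cost of Algorithm \ref{alg/improvedvsamp} by counting how many iterations the outer loop performs and how much each intersection in the inner loop costs.

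\textbf{Outer loop.} The outer loop scans $N_{>_{aeo}}(q)$, which is stored in Absolute Edge Order, so the absolute edge values $a(P_+(e(q,u)))$ appear in non-increasing order. By the early-return check on Line 2, the loop stops at the first neighbour $u$ with $a(P_+(e(q,u))) < a(t)$. By Lemma \ref{lemma/probrange}, none of the remaining edges can lie in a valid triangle, so returning there loses nothing. Hence the loop body runs for at most $deg_{>t}(q)$ neighbours, plus one final $O(1)$ check. Computing $p$ via Lemma \ref{lemma/newcutoff} costs $O(1)$ per iteration.

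\textbf{Inner intersection.} For each surviving $u$, we intersect $N_{>p}(q)$ with $N_{>_{aeo},>t}(u)$. Without hashing, the standard sorted-list merge runs in time linear in the sizes of the two prefixes actually scanned.
\begin{itemize}
\item Since $p$ is an absolute edge value threshold with $p \geq a(t)$ whenever a valid triangle is possible, the first list is a prefix of $q$'s list lying inside the region with absolute value $\geq a(t)$. Its size is at most $deg_{>t}(q)$.
\item The second list contains only edges incident to $u$ with absolute value $\geq a(t)$. Its size is at most $deg_{>t}(u) \leq maxdeg_{>t}$.
\end{itemize}
Because both lists are in Absolute Edge Order, each can be cut off at its threshold in time proportional to the prefix kept, so scanning costs no more than the elements retained. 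The case where $p$ falls below $a(t)$ or outside $[0,0.5]$ needs separate treatment. If $p < a(t)$, replacing $p$ by $a(t)$ loses nothing, again by Lemma \ref{lemma/probrange}. If $p$ lies outside $[0,0.5]$, no valid triangle exists and the intersection is empty or skipped.

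\textbf{Per-triangle checks and combination.} Each candidate $v$ found by the intersection is checked for $P_{bal}$ and $P_{unbal}$ in $O(1)$ time, and the number of candidates is bounded by the intersection size. Multiplying the at most $deg_{>t}(q)$ outer iterations by the per-iteration cost $O(deg_{>t}(q) + maxdeg_{>t})$ gives the stated bound $O(deg_{>t}(q) \cdot (deg_{>t}(q) + maxdeg_{>t}))$.

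The main obstacle is justifying that the lists have been truncated to the $\geq a(t)$ region without scanning pruned entries. This follows from storing each neighbourhood sorted in Absolute Edge Order, as the paper assumes after Theorem \ref{thm/imprTime}, together with Lemma \ref{lemma/probrange}.
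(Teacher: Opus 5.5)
Your structure is the paper's: at most $deg_{>t}(q)$ outer iterations because of the early return on $q$'s list in Absolute Edge Order, times a per-iteration intersection cost of $O(deg_{>p}(q) + deg_{>_{aeo},>t}(u))$, which is bounded by $O(deg_{>t}(q) + maxdeg_{>t})$. The step that fails as you justify it is the intersection itself. You invoke a ``standard sorted-list merge'', but you also rely on both lists being stored in Absolute Edge Order so they can be truncated at the threshold. A two-pointer merge needs both lists sorted by a common key on the neighbour itself, as with Node Order in the baseline. Under Absolute Edge Order, a common neighbour $v$ is positioned by $a(e(q,v))$ in one list and by $a(e(u,v))$ in the other, so the merge misses common neighbours. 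Re-sorting the truncated prefixes by vertex id would cost an extra logarithmic factor. The repair is easy: put the retained part of $N_{>p}(q)$ into a temporary hash set (or mark it in a boolean array over $V$ allocated once), scan the retained part of $u$'s list with $O(1)$ membership tests, then clear the marks. This costs $O(deg_{>t}(q) + maxdeg_{>t})$ per outer iteration and gives the stated bound. The paper's own proof simply asserts this cost without naming a mechanism.

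A smaller point: your rule ``if $p < a(t)$, replace $p$ by $a(t)$'' changes the algorithm, which a complexity proof of Algorithm~\ref{alg/improvedvsamp} should not do. If that case could really occur, $N_{>p}(q)$ could be longer than $deg_{>t}(q)$, so you should show the case is impossible instead. Write $\pi = P_+(e(q,u))$ and $d = a(e(q,u))$, and take $t > \tfrac12$, so that $d \ge a(t) > 0$ after the early-return test. Then $4\pi^2 - 4\pi + 1 = 4d^2$ and $2\pi^2 - 2\pi + t = t - \tfrac12 + 2d^2$. Hence the cutoff $p$ computed inside the loop equals $a(t)/(4d^2) \ge a(t)$, since $d \le \tfrac12$. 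So $N_{>p}(q)$ always lies inside the $\ge a(t)$ prefix of $q$'s list, and $|N_{>p}(q)| \le deg_{>t}(q)$ holds unconditionally.
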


\begin{proof}
    For each query vertex $q$, we check each of it's neighbour $u$ for triangles containing $e(q, u)$ (a total of $O(deg_{>t}(q)$ time). Assuming we do not pre-hash neighbour lists the cost of finding triangles with $e(q, u)$ is $O(deg_{>p}(q) + deg_{>_{aeo}, >t}(u))$ where $deg_{>_{aeo}, >t}(y) \leq maxdeg_{>t}$. If we do pre-hash, the cost of finding triangles containing $e(q, u)$ is $O(\min \{ deg_{>p}(q), deg_{>_{aeo},>t}(u) \})$.
\end{proof}

The worst case time cost of our vertex-based sampling approach (assuming $k$ samples) can be simplified to $O(k*maxdeg^2)$. However in reality our optimizations and pruning heuristics significantly reduce the runtime assuming a reasonable $t$ value. Our method also requires no additional memory aside from a few negligible counters to track values.

\begin{theorem}
    Algorithm \ref{alg/improvedvsamp} is unbiased.
\label{thm/vSampBias}
\end{theorem}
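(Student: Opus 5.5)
The unbiasedness claim concerns the vertex-sampling estimator of Algorithm \ref{alg/vsamp} when the local counts $|\btri{t}(q)|$ and $|\utri{t}(q)|$ are produced by Algorithm \ref{alg/improvedvsamp}. The plan has two parts. First, show that Algorithm \ref{alg/improvedvsamp} returns the \emph{exact} local counts, i.e. the number of uncertain balanced (resp. unbalanced) triangles containing $q$. Second, show that the extrapolated sample $\frac{|\btri{t}(q)| \cdot |V|}{3}$ has expectation $|\btrival{t}{G}|$ when $q$ is drawn uniformly from $V$. The running average of such samples then also has this expectation, by linearity. By Lemma \ref{lemma/revtri}, everything said for balanced triangles transfers to unbalanced ones, so I will only treat the balanced case.

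\textbf{Step 1 (exactness of the local count).} I will compare Algorithm \ref{alg/improvedvsamp} against the naive local enumeration, which checks every pair of neighbours of $q$. Every triangle $\trival{q,u,v}$ that is skipped must be skipped for one of three reasons.
\begin{itemize}
\item It is cut off by the early return. In that case it contains the edge $e(q,u)$ with $a(e(q,u)) < a(t)$, so Lemma \ref{lemma/probrange} rules it out.
\item It fails the cutoff $p$ on the edge $e(q,v)$ or $e(u,v)$. Lemma \ref{lemma/newcutoff} rules this out.
\item It is not counted from $u$ because $v$ precedes $u$ in Absolute Edge Order. I need to argue that such a triangle is instead counted exactly once from whichever of $q$'s two incident edges is larger in the order. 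Since $\trival{q,u,v}$ has exactly two edges at $q$, the neighbour iteration from $q$ should meet it exactly once, from its first such edge.
\end{itemize}

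\textbf{Main obstacle.} The ordering conditions in Algorithm \ref{alg/improvedvsamp} are written loosely: the intersection is $N_{>p}(q) \cap N_{>_{aeo},>t}(u)$. I would first reinterpret this as restricting $v$ so that $e(q,v) >_{aeo} e(q,u)$, which is what avoids double counting. I would then check that Lemma \ref{lemma/newcutoff} applies with $e=e(q,u)$ as the highest-ordered edge at $q$. The difficulty is that $e(u,v)$ may have a larger absolute value than $e(q,u)$. The triangle is still reached from $q$ via its first incident edge, so the upper-bound half of Lemma \ref{lemma/newcutoff} has to be applied only to edges at $q$. I will need care here, and may have to argue only about the lower-bound pruning for $e(u,v)$.

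\textbf{Step 2 (expectation).} Let $X = \frac{|V|}{3}\,|\btri{t}(q)|$ with $q$ uniform on $V$. Then
\[
\mathbb{E}[X] = \frac{|V|}{3}\cdot\frac{1}{|V|}\sum_{q\in V}|\btri{t}(q)| = \frac{1}{3}\sum_{q \in V}|\btri{t}(q)|.
\]
Each uncertain balanced triangle contains exactly three vertices, so it is counted exactly three times in $\sum_{q}|\btri{t}(q)|$. Hence $\mathbb{E}[X]=|\btrival{t}{G}|$.

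\textbf{Sampling without replacement.} The algorithm samples without replacement, but each individual draw is still marginally uniform on $V$. So each term of the running average has expectation $|\btrival{t}{G}|$. The running-average update is algebraically the arithmetic mean of the per-sample estimates, so its expectation is also $|\btrival{t}{G}|$, which proves unbiasedness.
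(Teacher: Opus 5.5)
Your Step~2, including the remark that draws without replacement are still marginally uniform, is exactly the paper's entire proof. The paper takes the value returned by Algorithm~\ref{alg/improvedvsamp} to be the true local count $|\btri{t}(q)|$ and computes the expectation. The gap is your Step~1, which cannot be completed: the local counts are not exact under either reading of the algorithm.

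Write $b(e)=2P_+(e)-1$. A triangle with edges $e_1,e_2,e_3$ then has $P_{bal}=\frac{1}{2}\bigl(1+b(e_1)\,b(e_2)\,b(e_3)\bigr)$, and the cutoff of Lemma~\ref{lemma/newcutoff} reads $|b(e')|\ge (2t-1)/b(e)^2$. This bound is valid only when $e$ has the largest $|b|$ in the \emph{whole} triangle. Its lower-bound half cannot be detached from the upper bound, as you propose.

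Your reinterpretation puts the order on $e(q,v)$, applies the cutoff from $e(q,u)$ to $e(q,v)$, and leaves $e(u,v)$ unordered. This fails when $e(u,v)$ dominates. Take $t=0.75$, $P_+(e(u,v))=1$, $P_+(e(q,u))=0.9$, $P_+(e(q,v))=0.85$.
\begin{itemize}
\item The triangle is balanced with probability $0.78\ge t$.
\item The cutoff from $e(q,u)$ is $p=a(0.890625)=0.390625>0.35=a(e(q,v))$, so the triangle is pruned at $q$.
\item Since $e(q,u)<_{aeo}e(q,v)$, it is never revisited from $q$.
\end{itemize}
On the graph consisting of this single triangle, each of the other two vertices counts it once. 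Your version of the estimator therefore has expectation $\frac{1}{3}(0+1+1)=\frac{2}{3}$ instead of $1$, so it is biased.

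The literal text instead requires $e(u,v)$ to come after $e(q,u)$ (the list $N_{>_{aeo},>t}(u)$). That is precisely what makes the cutoff sound: then $|b(e(u,v))|\le|b(e(q,u))|$, so $|b(e(q,v))|\ge \frac{2t-1}{|b(e(q,u))|\,|b(e(u,v))|}\ge\frac{2t-1}{b(e(q,u))^2}$. Lemma~\ref{lemma/probrange} justifies the early return and the $a(t)$ filter. However, $e(q,v)$ is then unconstrained in order, so the returned local counts are still not exact. A classified triangle with edges $f_1<_{aeo}f_2<_{aeo}f_3$ is counted $0$ times at the vertex opposite $f_1$, once at the vertex opposite $f_2$, and twice at the vertex opposite $f_3$. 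In the example this gives $0$ at $q$, $1$ at $v$ and $2$ at $u$.

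What actually holds, and what Step~1 should prove instead, is the aggregate identity $\sum_{q\in V}c(q)=3\,|\btrival{t}{G}|$ for the returned values $c(q)$, reading the cutoffs as non-strict. This holds because $0+1+2=3$. With it, Step~2 goes through verbatim with $c(q)$ in place of $|\btri{t}(q)|$. The unbalanced case is identical, since $P_{unbal}>t$ is the condition $b(e_1)b(e_2)b(e_3)<-(2t-1)$ with the same absolute-value bounds.
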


\begin{proof}
    Let us first show Vertex-based Balanced Triangle Counting is Unbiased.

Let the true Uncertain Balanced Triangle Count be denoted by $|B\Delta^t_G| = \frac{1}{3}\sum_{v \in V} (|B\Delta^t(v)|)$.

Let the estimated Uncertain Balanced Triangle Count be denoted by $|\tilde{\btrival{t}{G}}| = \frac{|V|}{3k}\sum_{i = 1}^{k} (|B\Delta(v_i)|)$.

\begin{equation*}
\begin{split}
\mathbb{E}[\tilde{\btrival{t}{G}}] &= \mathbb{E} \left[ \frac{|V|}{3k}\sum_{i = 1}^{k} (|B\Delta(v_i)|) \right]\\
&= \frac{|V|}{3} \mathbb{E} \left[ \frac{1}{k} \sum_{i = 1}^{k} (|B\Delta(v_i)|) \right]\\
&= \frac{|V|}{3} \frac{1}{|V|} \sum_{v \in V} (|B\Delta^t(v)|)\\
&= \frac{1}{3} \sum_{v \in V} (|B\Delta^t(v)|)\\
&= |\btrival{t}{G}|
\end{split}
\end{equation*}

The proof that Vertex-based Unbalanced Triangle Counting is unbiased follows the exact same structure, only replacing $\btrival{t}{G}$ with $\utrival{t}{G}$.
\end{proof}

\subsection{Edge-Based Sampling}
\begin{algorithm}
\SetKwInOut{Input}{Input}\SetKwInOut{Output}{Output}
\Input{$G$: Uncertain Signed Graph \\ $t$: Balanced/Unbalanced Probability Threshold \\ $k$: Number of Samples}
\Output{$\tilde{|\btrival{t}{G}|}$: Approximate Uncertain Balanced Triangle Count \\
$\tilde{|\utrival{t}{G}|}$: Approximate Uncertain Unbalanced Triangle Count}
$i \leftarrow 0$\;
\While{$i \leq k$}{
$e \leftarrow$ Uniform Random Sample from $E$\;
$|\btri{t}(e)|, |\utri{t}(e)| \leftarrow EdgeLocalCount(e)$\;
$|\tilde{\btrival{t}{G}}| \leftarrow \frac{i*|\tilde{\btrival{t}{G}}| + \frac{|\btri{t}(e)|*|E|}{3}}{i + 1}$\;
$|\tilde{\utrival{t}{G}}| \leftarrow \frac{i*|\tilde{\utrival{t}{G}}| + \frac{|\utri{t}(e)|*|E|}{3}}{i + 1}$\;
$i++$\;
}
\caption{Baseline Edge Sampling Algorithm}
\label{alg/esamp}
\end{algorithm}

Edge-based sampling is essentially a choice to further reduce the search space of each local sample in comparison to the vertex-based approach. We adapt our vertex-sampling framework to this variation based on a similar edge-based sampling framework which gets the local count for each edge then extrapolates and averages of $k$ samples (Algorithm \ref{alg/esamp}).

\IncMargin{0.5em}
\begin{algorithm}[h]
\SetKwInOut{Input}{Input}\SetKwInOut{Output}{Output}
\Input{$G$: Uncertain Signed Graph \\ $t$: Balanced/Unbalanced Probability Threshold \\ $e_q(u, v)$: Query Edge}
\Output{$|\btri{t}(e_q)|$: Local Uncertain Balanced Triangle Count
\\ $|\utri{t}(e_q)|$: Local Uncertain Unbalanced Triangle Count}
\If{$a(P_+(e(q, u))) < a(t)$}{
\Return{$|\btri{t}(e_q)| = 0, |\utri{t}(e_q)| = 0$}\;
}
\ForEach{$w \in N_{>t}(u) \cap N_{>t}(v)$}{
\If{$P_{bal}(\Delta_{(u, v, w)}) \geq t$}{
$|\btri{t}(e_q)|++$\;
}
\If{$P_{unbal}(\Delta_{(u, v, w)}) > t$}{
$|\utri{t}(e_q)|++$\;
}
}
\caption{Improved Edge Local Search}
\label{alg/improvedesamp}
\end{algorithm}
\DecMargin{0.5em}

The details of deriving the local counts of an edge is detailed in Algorithm \ref{alg/improvedesamp}. We can immediately terminate the search with the current count if the edge itself is unable to be a part of any relevant result due to its own probability via Lemma \ref{lemma/probrange} (Line 1-2). When finding the common neighbours of $u$ and $v$, we can once again utilise $a(t)$ as a point of early termination as we search through $N(u)$ (assuming $deg(u) < deg(v)$) as it is sorted via Absolute Edge Order. We then only check probabilities of triangles which have passed all these conditions (Lines 3-7).

\begin{theorem}
    The time complexity of Algorithm \ref{alg/improvedesamp} is \\ $O(deg_{>t}(u) + deg_{>t}(v))$ assuming neighbour lists are pre-hashed and $O(\min\{deg_{>t}(u), deg_{>t}(v)\})$ otherwise.
    \label{thm/eSampTime}
\end{theorem}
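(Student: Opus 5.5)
The plan is to bound the cost of Algorithm \ref{alg/improvedesamp} in three parts: the early check on Lines 1--2, the intersection on Line 3, and the per-triangle work on Lines 4--7. The statement gives the additive bound $O(deg_{>t}(u) + deg_{>t}(v))$ for the pre-hashed case and the $\min$ bound for the non-hashed case, and I will argue each as worded.

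\textbf{Constant-time parts.} Lines 1--2 compare $a(P_+(e_q))$ with $a(t)$, which takes $O(1)$. By Lemma \ref{lemma/probrange}, when this test fails no triangle containing $e_q$ can be valid, so returning zero counts is correct. For every candidate $w$ produced on Line 3, the triangle $\Delta_{(u,v,w)}$ has three known edge probabilities. Evaluating $P_{bal}$ from its four-term formula and setting $P_{unbal} = 1 - P_{bal}$ is therefore $O(1)$ work. So Lines 4--7 cost $O(1)$ per intersection element, and the number of such elements is at most $\min\{deg_{>t}(u), deg_{>t}(v)\}$. The whole bound therefore reduces to bounding the intersection on Line 3.

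\textbf{The truncated lists.} Neighbour lists are stored in Absolute Edge Order. Scanning $N(u)$ from the front and stopping at the first edge with $a(\cdot) < a(t)$ enumerates exactly $N_{>t}(u)$ in $O(deg_{>t}(u))$ time. The cutoff never has to be located by scanning past it, and by Lemma \ref{lemma/probrange} no pruned edge can contribute. The same holds for $v$.

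\textbf{Pre-hashed case.} With neighbourhoods hashed, we scan the shorter truncated list and probe the other list in $O(1)$ per element. This costs $O(\min\{deg_{>t}(u), deg_{>t}(v)\})$. Since $\min\{a,b\} \le a + b$, this is in particular $O(deg_{>t}(u) + deg_{>t}(v))$, which is exactly the claimed bound. It also covers a plain merge-style scan of both truncated lists if that is used instead.

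\textbf{Non-hashed case and main obstacle.} Following the paper's convention that $deg(u) < deg(v)$, we scan only $N_{>t}(u)$, in $O(deg_{>t}(u))$ time. For each $w$ found, we must decide whether $w \in N_{>t}(v)$ without traversing $v$'s list. The main obstacle is that the neighbour lists are sorted by Absolute Edge Order rather than by id, so neither a merge nor a binary search on ids is available for this test. My first attempt is to use the graph's stored edge set instead of any per-neighbourhood hash. The network is loaded as a set of edges carrying their $P_+$ values, as in Theorem \ref{thm/imprmem}, so we look up $e(v,w)$ directly in $O(1)$. That lookup simultaneously tells us whether the edge exists and whether $a(e(v,w)) \ge a(t)$, which is exactly the test $w \in N_{>t}(v)$. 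Each element of the shorter list then costs $O(1)$, and the total is $O(\min\{deg_{>t}(u), deg_{>t}(v)\})$. If this global edge lookup is not accepted as available, this step is where the argument would need an additional structural assumption, and I would make that assumption explicit at that point.
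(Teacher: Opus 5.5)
Your decomposition matches what the paper intends, and your pre-hashed case is fine. That decomposition is: the $O(1)$ early exit justified by Lemma~\ref{lemma/probrange}, truncation of the Absolute-Edge-Ordered lists at $a(t)$, and $O(1)$ work per candidate $w$.

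\textbf{The gap is the ``otherwise'' case.} An $O(1)$ lookup of $e(v,w)$ in ``the graph's stored edge set'' needs a hash table (or adjacency matrix) over $E$. That is pre-hashing of every neighbourhood under another name, so it is exactly what this case excludes. Nothing in the paper supplies it: Theorem~\ref{thm/imprmem} only stores $O(|E|)$ neighbour lists in Absolute Edge Order. Those lists are not sorted by id, and a common neighbour $w$ sits at unrelated positions in $N(u)$ and $N(v)$, keyed by $a(e(u,w))$ versus $a(e(v,w))$. So there is no merge, no binary search on ids, and no $O(1)$ membership test, and in the worst case you must read both truncated lists. For the same reason, the ``merge-style scan'' you mention in your pre-hashed paragraph is not actually available. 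The literal non-hashed $O(\min\{deg_{>t}(u), deg_{>t}(v)\})$ bound therefore cannot be obtained your way, and ``making the assumption explicit'' just turns it into the hashed case.

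\textbf{What you should notice instead.} The statement's two case labels are swapped relative to Theorems~\ref{thm/imprTime} and~\ref{thm/vSampTime}, and relative to the intersection discussion for Algorithm~\ref{alg/baselineubtec}. The paper's one-line proof says this result ``follows naturally'' from those theorems, with $a(t)$ replacing the cutoff $p$. There, without pre-hashing the intersection costs $O(deg_{>t}(u)+deg_{>t}(v))$, and with pre-hashing it costs $O(\min\{deg_{>t}(u),deg_{>t}(v)\})$.

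Your argument already proves this intended version once you add the easy non-hashed step. Mark the entries of $N_{>t}(u)$, with their $P_+$ values, in an id-indexed array or a temporary hash set in $O(deg_{>t}(u))$ time. Then scan $N_{>t}(v)$ with $O(1)$ probes.

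\textbf{A smaller point for the $\min$ bound.} You must scan the endpoint with the smaller \emph{truncated} degree; the paper assumes $deg_{>t}(u) < deg_{>t}(v)$. Your assumption $deg(u) < deg(v)$ does not imply this. You can still identify the shorter truncated list in $O(\min\{deg_{>t}(u),deg_{>t}(v)\})$ time by advancing through both lists in lockstep until one reaches the $a(t)$ cutoff.
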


\begin{proof}
    This proof follows naturally from Theorem \ref{thm/imprTime} and \ref{thm/vSampTime} with the only difference being the search space in the neighbour list $N(u)$ (assuming $deg_{>t}(u) < deg_{>t}(v)$) cannot be bounded by $p$ and instead must be bounded by $a(t)$.
\end{proof}

Again, our sampling approach requires no additional memory complexity aside from a few counters to keep track of value.

\begin{theorem}
    Algorithm \ref{alg/improvedesamp} is unbiased.
\label{thm/eSampBias}
\end{theorem}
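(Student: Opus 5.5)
The plan mirrors the proof of Theorem \ref{thm/vSampBias}, replacing the vertex-local counts with edge-local counts. The argument has two parts. First, Algorithm \ref{alg/improvedesamp} returns exactly the number of uncertain balanced (resp.\ unbalanced) triangles containing the sampled edge. Second, the averaging in Algorithm \ref{alg/esamp} is an unbiased extrapolation of that local quantity.

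\emph{Step 1 (exactness of the local count).} I would show that for any edge $e_q(u,v)$ the output $|\btri{t}(e_q)|$ equals the true number of triangles $\trival{u,v,w}$ with $P_{bal}(\trival{u,v,w}) \geq t$, and similarly for the unbalanced count.
\begin{itemize}
\item The early return in Lines 1--2 is safe by Lemma \ref{lemma/probrange}: if $a(e_q) < a(t)$, every triangle through $e_q$ is balanced and unbalanced with probability in $[\min\{P_+(e_q),1-P_+(e_q)\}, \max\{P_+(e_q),1-P_+(e_q)\}]$, which lies strictly inside the unclassified range. Hence the true local counts are $0$.
\item The restriction of the common-neighbour search to $N_{>t}(u) \cap N_{>t}(v)$ is safe for the same reason. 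Any triangle containing an edge $e(u,w)$ or $e(v,w)$ with absolute edge value below $a(t)$ cannot be classified, again by Lemma \ref{lemma/probrange}.
\item Every remaining $w$ is tested explicitly against the defining inequalities. No ordering restriction is imposed here, unlike Algorithm \ref{alg/improvedubtec}, so no triangle through $e_q$ is skipped.
\end{itemize}
Hence the local count is exact. Note that this step is where the pruning must be checked to be only prunings of impossible triangles.

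\emph{Step 2 (unbiasedness of the estimator).} Let $|\btri{t}(e)|$ be the exact local count. Since each triangle has exactly three edges, $\sum_{e\in E}|\btri{t}(e)| = 3|\btrival{t}{G}|$. The estimator after $k$ samples is $\frac{|E|}{3k}\sum_{i=1}^k|\btri{t}(e_i)|$. Each $e_i$ is uniform on $E$, so by linearity of expectation $\mathbb{E}[|\btri{t}(e_i)|] = \frac{1}{|E|}\sum_{e\in E}|\btri{t}(e)|$. This gives $\mathbb{E}[\tilde{|\btrival{t}{G}|}] = \frac{|E|}{3}\cdot\frac{3|\btrival{t}{G}|}{|E|} = |\btrival{t}{G}|$. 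Marginal uniformity holds whether sampling is with or without replacement, so independence is not needed.

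I would also remark that the running-average update in Algorithm \ref{alg/esamp} unrolls to exactly this arithmetic mean. The unbalanced case is identical with $\utri{t}$ in place of $\btri{t}$, or follows by Lemma \ref{lemma/revtri}.

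The main obstacle is Step 1. One must confirm that neither the $a(t)$ cutoff nor any implicit ordering drops a valid triangle. The key point is that Lemma \ref{lemma/probrange} applies to every edge of the triangle, not only the query edge.
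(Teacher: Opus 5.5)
Your proposal is correct, and your Step 2 is exactly the paper's proof: uniform edge sampling, linearity of expectation, and the identity $\sum_{e \in E}|\btri{t}(e)| = 3|\btrival{t}{G}|$. Your Step 1 checks, via Lemma~\ref{lemma/probrange}, that the $a(t)$ cutoffs on the query edge and on the two other edges discard only unclassifiable triangles; the paper takes this for granted, and it relies on the standing assumption $t \geq 0.5$, since for $t < 0.5$ the early return would wrongly discard valid triangles.
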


\begin{proof}
    Let us first show Edge-based Balanced Triangle Counting is Unbiased.

Let the true Uncertain Balanced Triangle Count be denoted by $|B\Delta^t_G| = \frac{1}{3}\sum_{e \in E} (|B\Delta^t(e)|)$.

Let the estimated Uncertain Balanced Triangle Count be denoted by $|\tilde{\btrival{t}{G}}| = \frac{|E|}{3k}\sum_{i = 1}^{k} (|B\Delta(e_i)|)$.

\begin{equation*}
\begin{split}
\mathbb{E}[\tilde{\btrival{t}{G}}] &= \mathbb{E} \left[ \frac{|E|}{3k}\sum_{i = 1}^{k} (|B\Delta(e_i)|) \right]\\
&= \frac{|E|}{3} \mathbb{E} \left[ \frac{1}{k} \sum_{i = 1}^{k} (|B\Delta(e_i)|) \right]\\
&= \frac{|E|}{3} \frac{1}{|E|} \sum_{e \in E} (|B\Delta^t(e)|)\\
&= \frac{1}{3} \sum_{e \in E} (|B\Delta^t(e)|)\\
&= |\btrival{t}{G}|
\end{split}
\end{equation*}

The proof that Edge-based Unbalanced Triangle Counting is unbiased follows the exact same structure, only replacing $\btrival{t}{G}$ with $\utrival{t}{G}$.
\end{proof}

\section{Experiments}
In this section we explore the efficiency (and when appropriate effectiveness) of our proposed solutions. Our implementation is available online \footnote{https://anonymous.4open.science/r/UncertainBalancedTriangles-6086}. Our techniques are implemented in Standard C++11 on an Intel Xeon Gold 6248 CPU @ 2.50GHz with 1.5TB of memory. We note that, given the nature of floating point arithmetic, our implementation cannot guarantee perfect accuracy for probability values with a long number of significant figures. That being said, particularly in regards to effectiveness, our implementation fairly evaluates our contribution. 

\subsection{Datasets}
\begin{table*}
\begin{center}
\caption{Dataset Information}
\begin{tabular}{ |c|c|c|c|c|c|c|} 
\hline
\textbf{Dataset} & $|V|$ & $|E|$ & $|\Delta|$ & Average $deg$ & Maximum $deg$ & Dist\\ 
\hline
Hamster (HA) & 1,858 & 12,532 & $1.67 \times 10^4$ & 8.49 & 272 & Uniform\\
\hline
Brightkite (BK) & 58,228 & 214,078 & $4.95 \times 10^5$ & 7.35 & 1,134 & Uniform\\
\hline
Slashdot (SD) & 79,116 & 467,730 & $5.38 \times 10^5$ & 13.03 & 2,543 & Signs\\
\hline
Epinions (EP) & 131,579 & 840,798 & $8.53 \times 10^6$ & 12.76 & 3,622 & Signs \\
\hline
LiveMocha (LM) & 104,103 & 2,193,083 & $3.36 \times 10^6$ & 42.13 & 2,980 & Uniform \\
\hline
Flixster (FX) & 2,523,386 & 7,918,801 & $7.89 \times 10^6$ & 6.276 & 1,474 & Uniform\\
\hline
Skitter (SK) & 1,696,415 & 11,095,298 & $2.88 \times 10^7$ & 13.08 & 35,455 & Beta\\
\hline
Petster (PE) & 623,754 & 13,991,746 & $2.69 \times 10^8$ & 44.86 & 80,634 & Beta\\
\hline
Flickr (FL) & 1,715,253 & 15,551,249 & $1.08 \times 10^8$ & 18.13 & 27,224 & Uniform \\
\hline
LiveJournal (LJ) & 3,997,962 & 34,681,189 & $1.78 \times 10^8$ & 17.35 & 14,815 & Beta \\
\hline
Orkut (OR) & 3,073,441 & 117,184,899 & $6.27 \times 10^8$ & 76.28 & 33,313 & Uniform \\
\hline
UK Domain (UK) & 18,483,186 & 261,787,258 & $4.45 \times 10^9$ & 28.33 & 194,955 & Uniform \\
\hline
\end{tabular}
\label{table/datasets}
\end{center}
\end{table*}

We conduct experiments on 12 different datasets, whose details can be found in Table \ref{table/datasets}. We examine various distributions of probabilities in order to determine the effectiveness of our algorithms under different environments. All base datasets are available from Stanford Snap \footnote{\url{http://snap.stanford.edu/data/index.html}} and Konect \footnote{\url{http://konect.cc/networks/}} however some may have been modified to remove self and multi-edges.

Slashdot (SD) and Epinions (EP) are two real-world signed networks and in order to capture the known distribution of signs we use a beta distribution for each edge depending on the sign such that the most likely probability to be assigned to the edge is $1$ if the sign is positive and $0$ if the sign is negative. For our other datasets, we assign either a uniform (between 0 and 1) or beta ($\alpha = 0.5, \beta = 0.5$) distribution. The choice of beta distribution biases the probabilities of edges to be closer to 0 and 1, which simulates a network with higher confidence in edge sign prediction.

\subsection{Performance of Exact Counting Algorithms}
We evaluate the performance of both our \textbf{Baseline} (Section 4) and \textbf{Improved} (Section 5) exact solutions in regards to counting the number uncertain balanced and unbalanced triangles in a network. We opt for pre-hashed neighbour lists for increased efficiency and the data is pre-sorted to simulate an environment in which these queries are expected to be performed repeatedly. In the interest of avoiding redundancy, we do not add individual experiments for enumeration as the runtime is near identical to the counting results.

\begin{figure}
\centering
\includegraphics[width = 1\columnwidth]{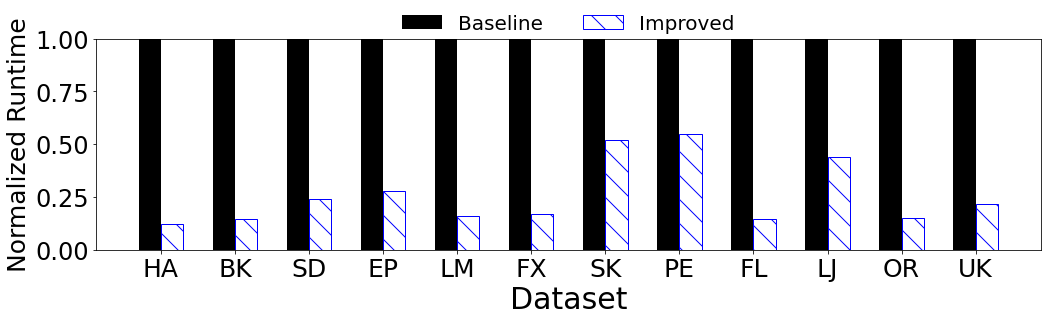}
\caption{Comparing the runtime of the Baseline with the Improved solution for all networks, for $t = 0.8$}
\label{fig/exactrun}
\end{figure}

To begin we examine the runtime of our improved solution in comparison to the proposed baseline for all our datasets for a threshold value of $t = 0.8$, shown in Figure \ref{fig/exactrun}. Due to the wide range of dataset sizes and subsequent runtimes, we normalise our results such that the baseline is 1 and the improved solution is a proportion of 1. In general, it is evident that our proposed improvements does lead to significant decrease in runtime which is in line with out expectations and analysis.

\begin{figure}
\centering
\includegraphics[height = 5cm]{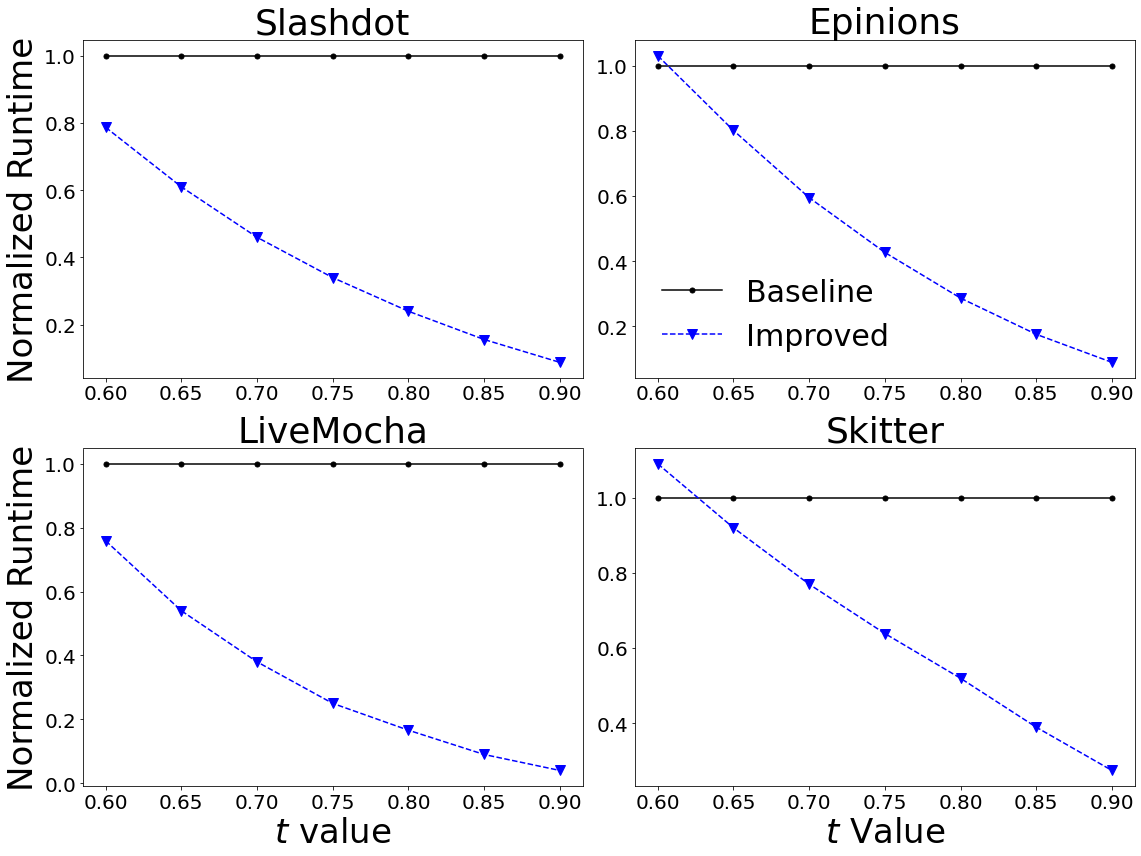}
\caption{Examining the change in runtime as $t$ increases on four different datasets.}
\label{fig/exactshift}
\end{figure}

Of course, the performance of our method is significantly influenced by the selection of the input $t$. Figure \ref{fig/exactshift} details the change in performance of our solution for an increasing value of $t$. In general it is clear there is a noticeable decrease in runtime as $t$ approaches $1$. We notice that our method at times is outperformed by the baseline for low $t$ values. This is due to the increased number of constant time procedures in implementation. In general, it is expected that a realistic selection for $t$ would be on the higher side as a triangle with an unconfident likelihood of being balanced or unbalanced is not useful for practical purposes. As such, for practical queries our results highlight a significant improvement upon the proposed baseline.

From our experiments, we notice that there appears to be a notable difference in runtime at $t = 0.8$ between networks with uniform and beta distributions for exact counting, which is unsurprising. To provide more detailed analysis, we fix the base network and re-assign sign probabilities from various distributions to examine the effect.

\begin{figure}
\centering
\includegraphics[height = 3cm]{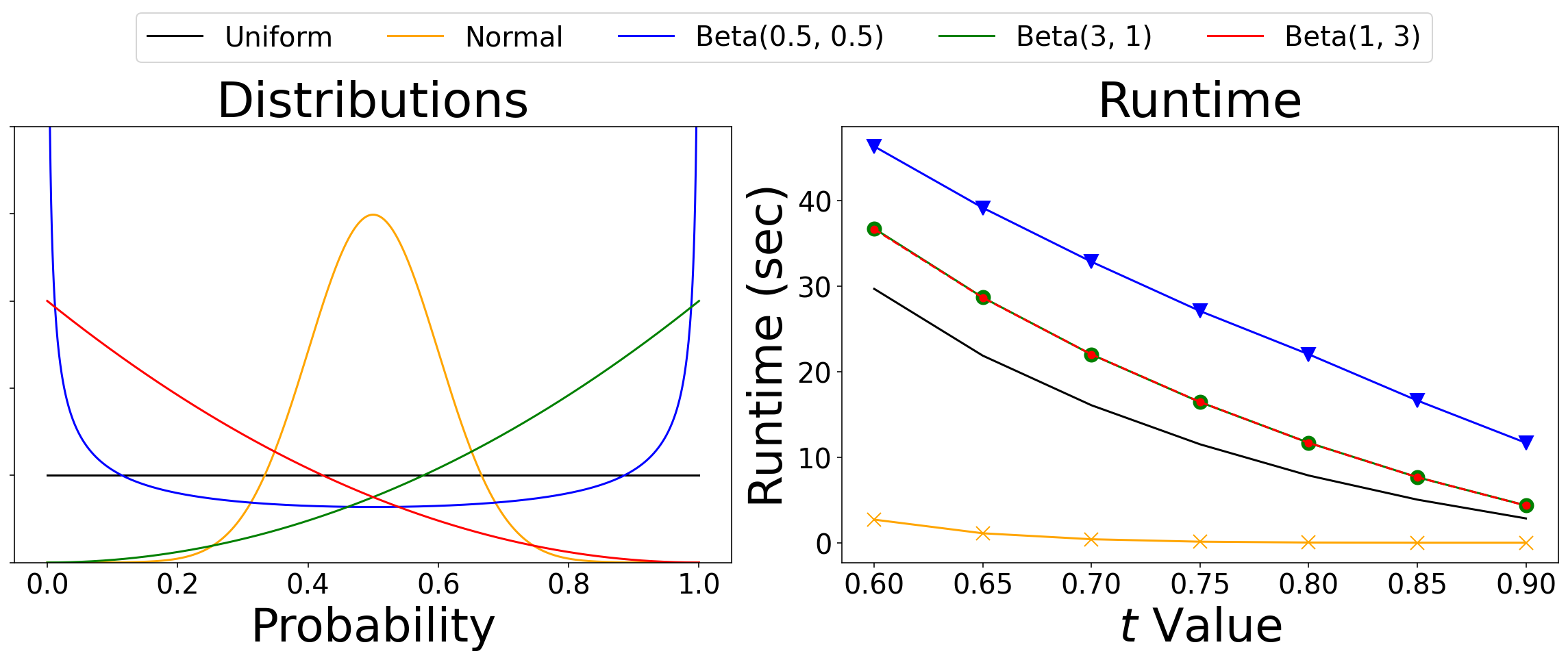}
\caption{An examination of the runtime of our improved solution for different edge sign distributions on the Skitter Dataset. The left figure displays the PDFs of the distributions and the right displays runtime as $t$ increases.}
\label{fig/exactdist}
\end{figure}

We examine five distributions and their effects on runtime in Figure \ref{fig/exactdist}, with the baseline dataset being Skitter. The choices of distribution are Uniform, Normal and three Beta distributions with parameters $(\alpha = 0.5, \beta = 0.5)$, $(\alpha = 3, \beta = 1)$ and $(\alpha = 1, \beta = 3)$. On the left plot of the figure, we can visualise the probability density functions of the distributions and observe that they cover a reasonable variety.

In regards to effects on runtime, on the right of Figure \ref{fig/exactdist} we plot the runtime of the improved algorithm as $t$ increases. The rule of thumb is that the more the distribution biases towards producing values close to $0$ and $1$, the longer the algorithm will take as less of the search space can be pruned. This is most evident by visualising how a normally distributed set of edge sign probabilities will result in an extremely fast execution time. The observed results unsurprising given our previous analysis of the improved algorithm.

\subsection{Performance of Approximate Counting Algorithms}
We now turn our attention to the sampling based methods (established in Section 6) in regards to evaluating their performance. The methods we evaluate are:

\begin{itemize}
    \item \textbf{vBase:} Vertex-based sampling with no improvements.
    \item \textbf{vImpr:} Vertex-based sampling with improved local search (Section 6.1)
    \item \textbf{eBase:} Edge-based sampling with no improvements
    \item \textbf{eImpr:} Edge-based sampling with improved local search (Section 6.2)
\end{itemize}

\begin{figure}
\centering
\includegraphics[width = 1\columnwidth]{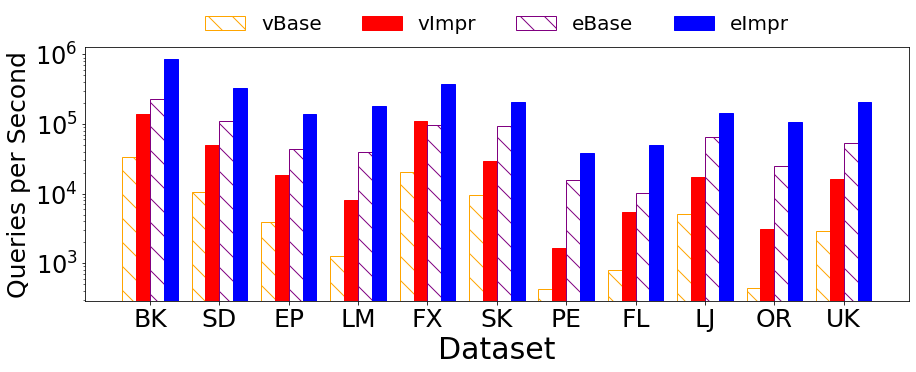}
\caption{Comparing the queries per second of the four sampling approaches for the tested networks, for $t = 0.8$}
\label{fig/sampqps}
\end{figure}

The efficiency of our sampling methods is examined in Figure \ref{fig/sampqps} (note the y-axis is logarithmic in scale), which details the Queries per Second of each sampling method on a number of our datasets ($t = 0.8$). We omit HA from our sampling experiments as it is too small to produce any meaningful insights.  

\begin{table}
\begin{center}
\caption{Runtimes (sec) of Each Method. Sampling assumes 10,000 samples}
\begin{tabular}{ |c|c|c|c|c|c|c|} 
\hline
{\bf Dataset} & bExact & iExact & vBase & vImpr & eBase & eImpr \\
\hline
BK & 0.493 & 0.073 & 0.293 & 0.072 & 0.043 & 0.011 \\
\hline
SD & 1.664 & 0.260 & 0.944 & 0.198 & 0.091 & 0.030 \\
\hline
EP & 5.503 & 1.581 & 2.550 & 0.544 & 0.235 & 0.071 \\
\hline
LM & 16.482 & 2.748 & 8.000 & 1.256 & 0.253 & 0.055 \\
\hline
FX & 31.256 & 5.302 & 0.488 & 0.091 & 0.102 & 0.026 \\
\hline
SK & 42.374 & 22.045 & 1.038 & 0.234 & 0.108 & 0.028 \\
\hline
PE & 219.343 & 122.538 & 23.753 & 6.042 & 0.647 & 0.265 \\
\hline
FL & 447.957 & 66.283 & 12.36 & 1.856 & 0.982 & 0.201 \\
\hline
LJ & 222.844 & 99.849 & 1.987 & 0.584 & 0.154 & 0.070 \\
\hline
OR & 1645.686 & 246.205 & 22.634 & 3.231 & 0.402 & 0.092 \\
\hline
UK & 1268.744 & 277.160 & 3.503 & 0.616 & 0.184 & 0.049 \\
\hline
\end{tabular}
\label{table/runtimes}
\end{center}
\end{table}

We compare the runtime of our exact solutions to sampling methods in Table \ref{table/runtimes}. For sampling methods, we report the runtime after 10,000 samples, which via Figure \ref{fig/sampacc} we see already achieves good accuracy. bExact and iExact represent the baseline and improved exact solutions respectively. We observe that on the larger datasets, the fastest sampling method (eSamp) achieves at least a 3 magnitude speed-up over the fastest exact method (iExact) which makes it extremely appealing for large systems.

In general, all datasets show that the improved sampling solutions dramatically outperforms the baseline solution for both vertex and edge-centric methods. Furthermore, vertex sampling is slower than edge sampling which is also to be expected as each vertex sample can be though of as equivalent to a number of edge samples. The results also demonstrate the scalability of sampling as a method for dealing with large networks, as the number of queries per second is not affected by the number of nodes or edges but more so via average degree.

\begin{figure}
\centering
\includegraphics[height = 3cm]{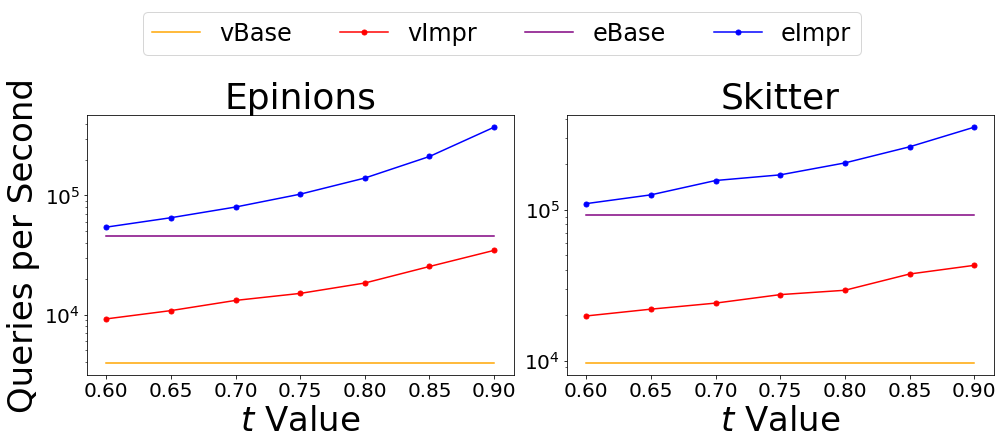}
\caption{An examination of the change in queries per second of sampling algorithms as $t$ increases.}
\label{fig/samprunt}
\end{figure}

The efficiency of our algorithms improves as $t$ increases as visualised by Figure \ref{fig/samprunt}. On Epinions and Skitter we notice significant improvement as $t$ increases on our proposed solutions in both the vertex and edge-centric method, whilst the baseline remains unchanged, which is to be expected. Once again, we highlight the y-axis is logarithmic.

\begin{figure}
\centering
\includegraphics[height = 5.5cm]{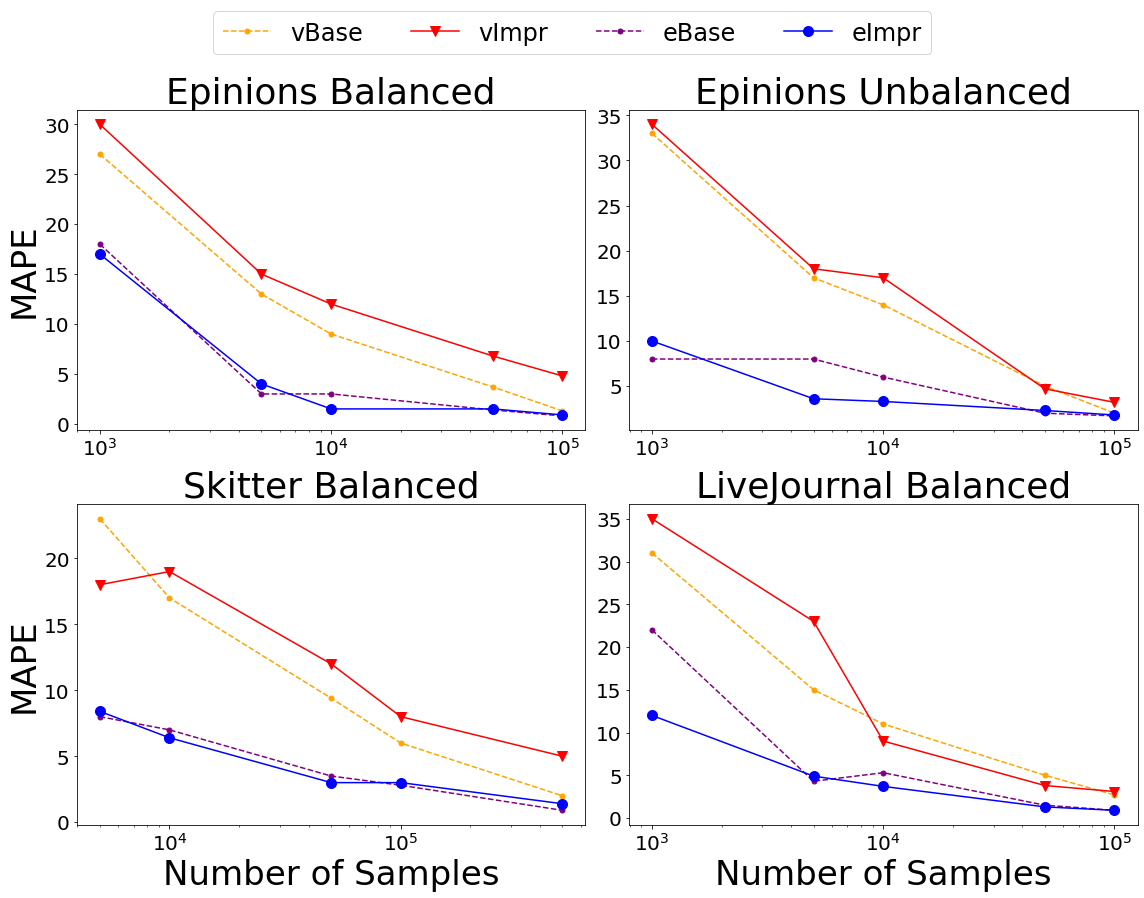}
\caption{The Mean Absolute Percentage Error (MAPE) of sampling methods as the number of samples increases ($t = 0.8$)}
\label{fig/sampacc}
\end{figure}

Figure \ref{fig/sampacc} details the Mean Absolute Percentage Error (MAPE) as the number of samples increases (a MAPE score of 0 means a perfect estimation) on three datasets (with both the uncertain balanced and unbalanced counts being observed on Epinions), averaged after 100 runs. We observe that all methods converge towards the true count, with both baseline and improved methods of the same type doing so at approximately the same rate. Our methods are attractive given the minimal time cost associated with sampling.

On the datasets tested, we found that edge-based sampling never converged slower than vertex-based sampling making \textbf{eImpr} the clear choice due to it's efficiency. As such, that method is our recommended choice if sampling is the desired direction of the network operator.

\section{Conclusion and Future Work}
In this work, we adapted the concept of balanced and unbalanced triangles to a setting in which edge signs exist with uncertainty and explored the problem of enumerating and counting the number of these triangles on such a system. We provided frameworks for both exact counting and enumeration as well as a direction for sampling the counts. Our experiments support the theoretical analysis of our algorithms, demonstrating the feasibility of our solutions.

The future direction this work provides is also interesting, with the problem of triangle counting lending itself in the direction of distributed or streaming algorithms to further increase the scalability of the problem. Another direction that may be promising is that of a distributed network perspective which would allow for larger network evaluation which could utilise exact approach on a smaller space to reach an exact count in a faster time frame. As such, a distributed framework (in particular the problem of forming suitable partitions) for the problem of counting uncertain balanced and unbalanced triangles is also of significant interest for future work.

As mentioned through the paper, structural balance theory is foundational to a large family of signed graph problems, with one particular area of interest being community detection. Notably, with these triangle counting and enumeration techniques in place a natural extension would be to formulate a suitable truss-based model in order to capture close communities with strong balanced support. With the foundations provided by this work, these problems can similarly be extended into the realistic problem space of networks with uncertain sign probabilities. Future attention in this area is critical in expanding the real-world usability of signed networks in order to provide algorithms and frameworks that are more flexible and informative than fixing signs based on uncertain methods.

\bibliographystyle{ACM-Reference-Format}
\bibliography{references}

\end{document}